\newif\ifpics
   \definecolor{cites}{rgb}{0.50 , 0.00 , 0.00}  
   \definecolor{urls} {rgb}{0.00 , 0.00 , 0.50}  
   \definecolor{links}{rgb}{0.00 , 0.00 , 0.50}   
\newcommand\C{{\mathbb C}}
\newcommand\D{{\mathbb D}}
\newcommand\N{{\mathbb N}}
\newcommand\bT{{\bf T}}
\newcommand\R{{\mathbb R}}
\newcommand\Z{{\mathbb Z}}
\newcommand\eps{{\varepsilon}}
\newcommand\spec{{\rm spec}\,}  
\newcommand\specn{{\rm spec}}   
\newcommand\speps{{\rm spec}_\eps\,}
\newcommand\spessp{{\rm spec}_{\rm ess}^p}
\newcommand\spess{{\rm spec}_{\rm ess}\,}
\newcommand\spptp{{\rm spec}_{\rm point}^p\,}
\newcommand\sppt{{\rm spec}_{\rm point}^\infty\,}
\newcommand\clos{{\rm clos\,}}
\newcommand\conv{{\rm conv}}
\newcommand\ri{{\rm i}}
\newtheorem{theorem}{Theorem}[section]
\newtheorem{proposition}[theorem]{Proposition}
\newenvironment{remark}
 {\par\noindent\refstepcounter{theorem}{\bf Remark \thetheorem\,}}
 {\raisebox{1mm}{\framebox{}}\pagebreak[2]}
\newcommand\Proofend{\rule{2mm}{2mm}}
\newenvironment{proof}
 {\par\noindent{\bf Proof.}}
 {\Proofend\pagebreak[2]}
\newenvironment{proofof}[1]
 {\par\noindent{\bf Proof of #1.}}
 {\rule{2mm}{2mm}\pagebreak[2]}
\numberwithin{figure}{section}  
\begin{document}
\title{\bf Eigenvalue problem meets Sierpinski triangle:\\computing the spectrum of a non-self-adjoint random operator}
\author{{\sc Simon N. Chandler-Wilde},\quad {\sc Ratchanikorn Chonchaiya}\\[3mm] and\quad {\sc Marko Lindner}}
\date{\today}
\maketitle
\begin{quote}
\renewcommand{\baselinestretch}{1.0}
\footnotesize {\sc Abstract.} The purpose of this paper is to prove
that the spectrum of the non-self-adjoint one-particle Hamiltonian
proposed by J.~Feinberg and A.~Zee (Phys.\ Rev.\ E
59 (1999), 6433--6443) has interior points.
We do this by first recalling that the spectrum of this random operator is the union of the set of $\ell^\infty$ eigenvalues of all infinite matrices with the same structure. We then construct an infinite matrix of this structure
for which every point of the open unit disk is an $\ell^\infty$
eigenvalue, this following from the fact that the components of the eigenvector are polynomials in the spectral parameter whose non-zero coefficients are $\pm 1$'s, forming the pattern of an infinite discrete Sierpinski triangle.
\end{quote}

\noindent
{\it Mathematics subject classification (2000):} Primary 47B80; Secondary 47A10, 47B36.\\
{\it Keywords:} random matrix, spectral theory, Jacobi matrix,
disordered systems.

\section{Introduction and Notations} \label{sec_intro}
In this paper we study infinite matrices of the form
\begin{equation} \label{eq:Ab}
\ \left(\begin{array}{cccccc} \ddots&\ddots\\
\ddots&0&1\\
&b_{-1}&0&1\\
&&b_{0}&0&1\\
&&&b_{1}&0&\ddots\\
&&&&\ddots&\ddots
\end{array}\right)
\end{equation}
with $b_k\in\{\pm 1\}:=\{-1,+1\}$ for all $k\in\Z$. We think of
\eqref{eq:Ab} as a linear operator acting via matrix-vector
multiplication on $\ell^p(\Z)$, the standard space of bi-infinite
complex sequences with $p\in [1,\infty]$. By $\{\pm 1\}^\Z$ we
denote the set of all sequences $b=(b_k)_{k\in\Z}$ with $b_k\in\{\pm
1\}$ for all $k\in\Z$, and we refer to the operator on $\ell^p(\Z)$
that is induced by the matrix \eqref{eq:Ab} as $A^b$. For
convenience, we will also refer to the matrix \eqref{eq:Ab} as
$A^b$. For $p\in [1,\infty]$ and $b\in\{\pm 1\}^\Z$, we write
\begin{eqnarray*}
\specn^p A^b&:=&\{\lambda\in\C\,:\,A^b-\lambda I \textrm{ is not
invertible on } \ell^p(\Z)\},\\
\spessp A^b&:=&\{\lambda\in\C\,:\,A^b-\lambda I \textrm{ is not
Fredholm on } \ell^p(\Z)\},\\
\spptp A^b&:=&\{\lambda\in\C\,:\,A^b-\lambda I \textrm{ is not
injective on } \ell^p(\Z)\}.
\end{eqnarray*}
Because \eqref{eq:Ab} is a band matrix, it holds (see
\cite{Kurbatov} and \cite{Li:Wiener}) that $\specn^p A^b$ and
$\spessp A^b$ do not depend on $p\in [1,\infty]$, so that it makes sense to abbreviate these as $\spec A^b$ and $\spess A^b$ in what follows. Note
however that the set of eigenvalues, $\spptp A^b$, does depend on $p$.

\newpage 

Physicists have studied the operator $A^b$ as the
(non-self-adjoint) Hamiltonian of a particle hopping (asymmetrically)
on a 1-dimensional lattice
\cite{FeinZee99,FeinZee99a,CicConMol2000,HolzOrlandZee}. Applications of such and
related Hamiltonians, especially examples with random diagonals, include
vortex line pinning in superconductors and growth models in
population biology. The particular model \eqref{eq:Ab} was proposed
by Feinberg and Zee in \cite{FeinZee99}, and some properties of its
spectrum have been studied in \cite{CicConMol2000,HolzOrlandZee} (also see
Paragraph 37, in particular Figure 37.7c, in \cite{TrefEmb:Book}).

In all these studies the focus is on the case of a random sequence
$b\in\{\pm 1\}^\Z$. A related but completely deterministic concept
is that of a pseudo-ergodic sequence. In accordance with Davies
\cite{Davies2001:PseudoErg}, we call $b\in\{\pm 1\}^\Z$ {\sl
pseudo-ergodic} if every finite pattern of $\pm 1$'s can be found
somewhere (as a string of consecutive entries) in $b$. If $b$ is
pseudo-ergodic (which is almost surely the case if all $b_k$,
$k\in\Z$, are independent (or at least not fully correlated)
samples from a random variable with values in $\{\pm 1\}$ and
nonzero probability for both $+1$ and $-1$) then, as a consequence
of \cite{CWLi2008:FC} (also see
\cite{CW.Heng.ML:tridiag,CWLi:Memoir,Li:BiDiag,Li:Habil} and cf.\ \cite{Davies2001:PseudoErg}), it holds
that
\begin{equation} \label{eq:spec}
\spec A^b\ =\ \spess A^b\ =\ \bigcup_{c\in\{\pm 1\}^\Z} \spec A^c\
=\ \bigcup_{c\in\{\pm 1\}^\Z} \sppt A^c.
\end{equation}
The contribution of \cite{CWLi2008:FC} is the third ``='' sign in
(\ref{eq:spec}) that enables, or at least simplifies, the explicit
computation of the spectra of particular pseudo-ergodic operators in
\cite{CW.Heng.ML:tridiag,CWLi:Memoir,Li:BiDiag}. The first ``=''
sign in (\ref{eq:spec}) follows immediately from the second; the
second comes from the Fredholm theory of much more general operators
and is typically expressed in the language of so-called limit
operators \cite{RaRoSi1998,RaRoSi:Book,Li:Book,CWLi:Memoir}. (A
similar equality, often with the closure taken over the union of
spectra, can be found in the literature on spectral properties of
Schr\"odinger and more general Jacobi operators
\cite{PasturFigotin,CarmonaLacroix,Davies2001:SpecNSA,Davies2001:PseudoErg,GoldKoru,Mantoiu1,Mantoiu,GeorgescuGolenia,
GeorgescuIfti2000,GeorgescuIfti2002,GeorgescuIfti2005,
RaRo_ess_spec_Schr_latt,LastSimon06,LastSimon99,Remling1,Remling2}.
The three last papers also shed some light on the role of limit
operators in the study of the absolutely continuous spectrum.)

Note that, by \eqref{eq:spec}, the spectrum of $A^b$ does not depend
on the actual sequence $b$ -- as long as it is pseudo-ergodic. In
\cite{CW.Heng.ML:tridiag} we obtain information about the spectrum,
pseudospectrum and numerical range of the bi-infinite matrix
operator $A^b$, its contraction $A^b_+$ to the positive half axis
(a semi-infinite matrix) and the finite sections $A^b_n$ which, for $n\in\N$, are
$n\times n$ submatrices of \eqref{eq:Ab}. Explicitly and precisely, these related matrices are
\[
A^b_+\ =\ \left(\begin{array}{ccccc}
0&1\\
b_{1}&0&1\\
&b_{2}&0&1\\
&&b_{3}&0&\ddots\\
&&&\ddots&\ddots
\end{array}\right)
\qquad\textrm{and}\qquad A^b_n\ =\ \left(\begin{array}{ccccc}
0&1\\
b_{1}&0&1\\
&b_{2}&0&\ddots\\
&&\ddots&\ddots&1\\
&&&b_{n-1}&0
\end{array}\right),
\]
where in the case $n=1$ we set $A^b_1 = (0)$. We explore in some detail in \cite{CW.Heng.ML:tridiag} the interrelations between the spectra and
pseudospectra of $A^b$, $A^b_+$ and $A^b_n$. Here, for $\eps>0$ and a bounded operator $A$ on $\ell^2(\N)$ or $\ell^2(\Z)$, or on $\C^n$ equipped with the 2-norm, we define the {\sl $\eps$-pseudospectrum} of $A$ (see e.g.
\cite{BoeLi:Pseudospectrum,TrefEmb:Book}) by
\[
\speps A\ :=\ \{\lambda\in\C\,:\, \lambda\in\spec A\textrm{ or }
\|(A-\lambda I)^{-1}\|>1/\eps\},
\]
 where $\|\cdot\|$
is the induced operator norm. It is convenient also to use the notation $\specn_0 A:=\spec A$.
Note that the finite matrix $A^b_n$ only depends on the $n-1$ values
$b_1,...,b_{n-1}\in\{\pm 1\}$. Recognising this, we will use  the notation $A^{b'}_n$, where $b'=(b_1,...,b_{n-1})\in
\{\pm 1\}^{n-1}$, as an alternative notation for the same matrix $A^b_n$.

Here is a summary of our results from \cite{CW.Heng.ML:tridiag}:

\begin{theorem} \label{thm:tridiag} \cite{CW.Heng.ML:tridiag}
If $b\in\{\pm 1\}^\Z$ is pseudo-ergodic (which holds almost surely if
$b$ is random in the sense discussed above) then the following statements hold.
\begin{itemize}
\item[\bf a) ] $\spec A^b$ is invariant under reflection about either axis as well
as under a $90^o$ rotation around the origin.

\item[\bf b) ] Provided the ``positive'' part of the sequence $b$ (by which we mean $(b_k)_{k\in\N}$) is itself pseudo-ergodic (contains every finite pattern of $\pm1$'s), then, for all $\eps\ge 0$ one has
\[
\speps A^b\ =\ \speps A^b_+.
\]

\item[\bf c) ] The numerical range of $A^b$ (considered as an operator on $\ell^2(\Z)$) is
$$
W(A^b)\ =\ \{x+\ri y\,:\,x,y\in\R,\, |x|+|y|< 2\},
$$
and $\spec\, A^b$ is a strict subset of the closure, $\clos(W(A^b))$, of the numerical range, so that
\[
\spec\, A^b\ \subsetneqq\ \{x+\ri y\,:\,x,y\in\R,\,|x|+|y|\leq 2\}.
\]

\item[\bf d) ] For every $n\in\N$, where $\Pi_n :=\left\{c\in \{\pm1\}^\Z: c \mbox{ is $n$-periodic}\right\}$, the set
\begin{equation} \label{eq:spperntri}
\pi_n\ :=\ \bigcup_{c\in\Pi_n}\spec A^c\ =\ \bigcup_{c\in\Pi_n}\sppt
A^c
\end{equation}
is contained in $\spec A^b$, by \eqref{eq:spec}. Each set $\pi_n$
consists of $k$ analytic arcs (see Figure \ref{fig:30pics1}) with
$\frac{2^n}{n}\le k\le 2^n$ that can be computed explicitly (as
unions of sets of eigenvalues of $n\times n$ matrices). In
particular,
$$
\pi_1\, =\, [-2,2] \cup [-2\ri, 2\ri] \quad \mbox{ and } \quad \pi_2
\,=\, \pi_1 \cup \{x+\ri y: -1\leq x \leq 1, \, y = \pm x\}.
$$

\item[\bf e) ] For all $n\in\N$ and  $\eps\ge 0$,
the set
\begin{equation} \label{eq:spfs}
\sigma_{n,\eps}\ :=\ \bigcup_{c\in\{\pm 1\}^{n-1}}\speps A^c_n
\end{equation}
is contained in $\speps A^b$ (see Figure \ref{fig:30pics2} for
$\eps=0$).

\item[\bf f) ] In the case of spectra ($\eps=0$),
the finite matrix eigenvalues $\sigma_n := \sigma_{n,0}$ from
\eqref{eq:spfs} are connected with the periodic operator spectra
$\pi_n$ from \eqref{eq:spperntri} by
\begin{equation} \label{eq:spefsinper}
\sigma_n\ \subset\ \pi_{2n+2}\ \subset\ \spec A^b
\end{equation}
for all $n\in\N$ (see Figure \ref{fig:pic5in12}).

\item[\bf g) ] As a special case of a much more general spectral inclusion
result from \cite{CW.Heng.ML:UpperBounds}, we can complement the
inclusion $\sigma_{n,\eps}\subset\speps A^b$ from e) by
\[
\sigma_{n,\eps}\ \subset\ \speps A^b\ \subset\
\sigma_{n,\,\eps+\eps_{n}}\qquad\textrm{and}\qquad \sigma_{n}\
\subset\ \spec A^b\ \subset\
\clos\!\!\left(\sigma_{n,\,\eps_{n}}\right),
\]
for $n\in\N$ and $\eps>0$, where $\eps_{n} = 4\sin \theta_n < 2\pi/(n+1)$, with $\theta_n$ the unique solution in the interval $\displaystyle{\left(\frac{\pi}{2(n+3)},\frac{\pi}{2(n+1)}\right)}$ of the equation
$$
2\cos\left((n+1)\theta\right)\ =\ \cos\left((n-1)\theta\right).
$$
\end{itemize}
\end{theorem}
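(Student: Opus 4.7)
My plan is to attack the seven parts using four complementary techniques: diagonal conjugation for the symmetries in (a), a Rayleigh-quotient estimate for the numerical range in (c), limit-operator theory for (b), Floquet theory combined with a pseudo-ergodic embedding trick for (d)--(f), and a quantitative finite-section perturbation for (g).

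For (a), I will set $D={\rm diag}(\ri^k)_{k\in\Z}$ and compute $DA^bD^{-1}$: its super-diagonal becomes $\ri^{-1}=-\ri$ and its sub-diagonal becomes $\ri\,b_{k-1}$, so $\ri\,DA^bD^{-1}=A^{-b}$, giving $\spec A^{-b}=\ri\,\spec A^b$. Since $-b$ is pseudo-ergodic whenever $b$ is, \eqref{eq:spec} forces $\spec A^{-b}=\spec A^b$, yielding $90^\circ$ rotational symmetry; real-axis reflection is free since $A^b$ has real entries, and the two remaining reflections are compositions of these. For (c), I will write $z_j:=\xi_{j+1}\overline{\xi_j}$, so $\langle A^b\xi,\xi\rangle=\sum_j(z_j+b_j\overline{z_j})$, which separates cleanly into a real part $2\sum_{j:b_j=+1}{\rm Re}\,z_j$ and an imaginary part $2\sum_{j:b_j=-1}{\rm Im}\,z_j$; the triangle inequality together with Cauchy--Schwarz then gives $|{\rm Re}\langle A^b\xi,\xi\rangle|+|{\rm Im}\langle A^b\xi,\xi\rangle|\le 2\sum_j|z_j|\le 2\|\xi\|^2$, with strict inequality for every $\xi\in\ell^2(\Z)$ since the equality case would force $|\xi_j|$ constant. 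In the converse direction, Bloch-type test vectors supported on long windows of $b$ with prescribed $\pm1$ densities realise every interior diamond point in the limit; the strict inclusion $\spec A^b\subsetneqq\clos W(A^b)$ then follows by exhibiting, via (g) with moderate $n$, an explicit interior diamond point lying outside $\clos\sigma_{n,\eps_n}$.

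For (b), I will invoke the limit-operator description of essential pseudospectra: under the positive-tail pseudo-ergodicity assumption, the set of limit operators of $A^b_+$ is exactly $\{A^c:c\in\{\pm1\}^\Z\}$, whence $\speps A^b_+=\bigcup_c\speps A^c=\speps A^b$ by the pseudospectral analogue of \eqref{eq:spec}. Part (d) reduces to Floquet theory: for $c\in\Pi_n$, block-diagonalisation represents $A^c$ as a direct integral of $n\times n$ Bloch matrices $M_c(\theta)$ whose eigenvalues trace analytic arcs in $\theta$; the count $2^n/n\le k\le 2^n$ comes from grouping the $2^n$ sign patterns into cyclic-shift orbits, and $\pi_1,\pi_2$ are short computations. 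For (e), pseudo-ergodicity of $b$ supplies, for any $c\in\{\pm1\}^{n-1}$, arbitrarily long windows of $b$ matching $c$; shifting and zero-padding an $\eps$-pseudo-eigenvector of $A^c_n$ into $\ell^2(\Z)$ yields one of $A^b$. Part (f) is the most delicate lower bound: given $c\in\{\pm1\}^{n-1}$, I will build a $(2n+2)$-periodic sequence $\tilde c$ by a reflection-and-doubling construction so that $A^{\tilde c}$ decouples into blocks each containing a copy of $A^c_n$, delivering $\spec A^c_n\subset\spec A^{\tilde c}\subset\pi_{2n+2}\subset\spec A^b$.

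The lower inclusions in (g) are restatements of (e) and (f), so the substantive content is the upper bound $\speps A^b\subset\sigma_{n,\,\eps+\eps_n}$, which I expect to be the main obstacle. My plan is a quantitative finite-section argument: given $\lambda\in\speps A^b$ and a near-singular unit vector $\xi$ for $A^b-\lambda I$, I multiply $\xi$ by a length-$n$ smooth cutoff $\chi$ and renormalise, so the product is supported on a window over which $b$ reads as some $c\in\{\pm1\}^{n-1}$; the residual $\|(A^c_n-\lambda I)\chi\xi\|$ is then controlled by the commutator $\|[A^b,\chi]\|$. The rate $\eps_n=4\sin\theta_n$ and the transcendental equation $2\cos((n+1)\theta)=\cos((n-1)\theta)$ emerge from minimising this commutator norm over admissible length-$n$ cutoffs, a Chebyshev-type extremal problem whose optimiser is characterised by that equation; the delicate step will be verifying that $\theta_n$ is unique in the interval $(\pi/(2(n+3)),\pi/(2(n+1)))$ and that the resulting bound is sharp.
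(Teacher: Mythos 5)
A preliminary but essential observation: this paper does not prove Theorem~\ref{thm:tridiag} at all. The theorem is quoted as a summary of results from the companion papers \cite{CW.Heng.ML:tridiag} and, for part~g), \cite{CW.Heng.ML:UpperBounds}, both listed as ``in preparation'', so there is no in-paper proof against which to measure your attempt. Judged on its own merits, your plan does identify what are surely the intended mechanisms for several parts: the gauge transformation $D=\mathrm{diag}(\ri^k)_{k\in\Z}$ with $\ri\,DA^bD^{-1}=A^{-b}$, combined with the $b$-independence of the spectrum from \eqref{eq:spec}, is exactly right for a); the Floquet/Bloch description and the necklace count are right for d); and the Dirichlet-to-periodic doubling is the right idea for f) --- though $A^{\tilde c}$ does not literally ``decouple into blocks'' (an infinite tridiagonal matrix with nonvanishing off-diagonals never does); the correct statement is that the antisymmetric $(2n+2)$-periodic extension of an eigenvector of $A^c_n$, vanishing at positions $0$ and $n+1$, is a bounded eigenfunction of $A^{\tilde c}$, and this requires a sign check on the reflected potential that you have not carried out.

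The concrete gaps. For e), your argument fails as stated: if $v$ is a unit vector with $\|(A^c_n-\lambda I)v\|<\eps$ and $\tilde v$ is its zero-padding placed over a window of $b$ that reads $c$, then $(A^b-\lambda I)\tilde v$ agrees with $(A^c_n-\lambda I)v$ inside the window but acquires the additional entries $v_1$ and $\pm v_n$ at the two positions just outside it, so that $\|(A^b-\lambda I)\tilde v\|^2=\|(A^c_n-\lambda I)v\|^2+|v_1|^2+|v_n|^2$, which need not be $<\eps^2$; the inclusion $\speps A^c_n\subset\speps A^b$ is true but requires a genuinely different argument (for $\eps=0$ it can be recovered from your f), but not for $\eps>0$). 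For b), the limit-operator machinery yields $\spess A^b_+=\spec A^b$, but the asserted equality of the full pseudospectra additionally requires excluding spectrum (e.g.\ point spectrum) of the half-line operator outside its essential spectrum, which your one-line appeal does not address. For c), your Rayleigh-quotient computation of $W(A^b)\subset\{x+\ri y:|x|+|y|<2\}$ is correct and essentially complete, but the reverse inclusion is only gestured at, and the strictness $\spec A^b\subsetneqq\clos(W(A^b))$ is deferred to g) ``with moderate $n$''; the second remark after the theorem records that $n=34$ (the pseudospectra of roughly $8.6\times 10^9$ matrices) was needed, so this is not a step one can wave at. Finally, for the upper bound in g) you correctly guess the cutoff/commutator mechanism, but the derivation of the explicit rate $\eps_n=4\sin\theta_n$ and of the characterising equation for $\theta_n$ is precisely the content of \cite{CW.Heng.ML:UpperBounds} and remains entirely open in your sketch.
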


\begin{remark}
Note that the right ``='' sign in \eqref{eq:spperntri} holds because
$\spec A^c=\sppt A^c$ for all periodic sequences $c$, whereas the
right ``='' sign in \eqref{eq:spec} only holds as stated, with the
union taken over all $c\in \{\pm1\}^\Z$; the spectrum and point
spectrum of $A^c$ are different, in general, for specific $c\in\{\pm
1\}^\Z$.
\end{remark}

\begin{remark}
The inclusions in g) imply that $\spec A^b \subset
\clos(\sigma_{n,\varepsilon_n}) \subset \clos(\specn_{\varepsilon_n}
A^b)$. Since $\varepsilon_n\to 0$ so that
$\clos(\specn_{\varepsilon_n} A^b)\to \spec A^b$ in the Hausdorff
metric \cite{TrefEmb:Book} as $n\to\infty$, it follows that
$\clos(\sigma_{n,\varepsilon_n})\to \spec A^b$ as $n\to\infty$. For
small values of $n$ the upper bound
$\clos(\sigma_{n,\varepsilon_n})$ can be evaluated very explicitly.
In particular, $\theta_1=\pi/6$ so that $\varepsilon_1 = 2$ and,
since $A_1^c = (0)$, we obtain that $\spec A^b \subset
\clos(\sigma_{1,\varepsilon_1}) = \{\lambda \in \C:|\lambda|\leq
2\}$. The result in c) above, that $\spec A^b$ is a strict subset of
the closure of the numerical range, comes from the bound in g)
applied with $n=34$, when $\sigma_{n,\varepsilon_n}$ is the union of
the pseudospectra of $2^{33} \approx 8.6\times 10^9$ matrices of
size $34\times 34$.
\end{remark}

\noindent
\ifpics{  
\begin{center}
\begin{tabular}{ccccc}
\includegraphics[width=0.175\textwidth]{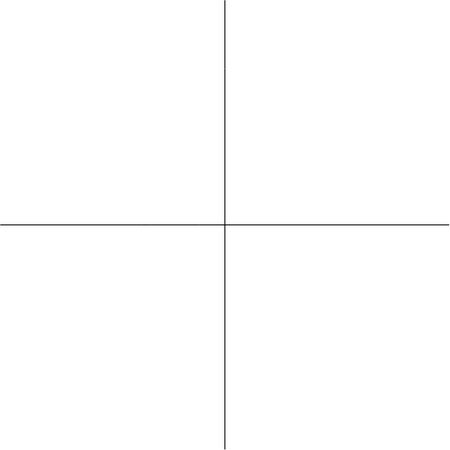}&
\includegraphics[width=0.175\textwidth]{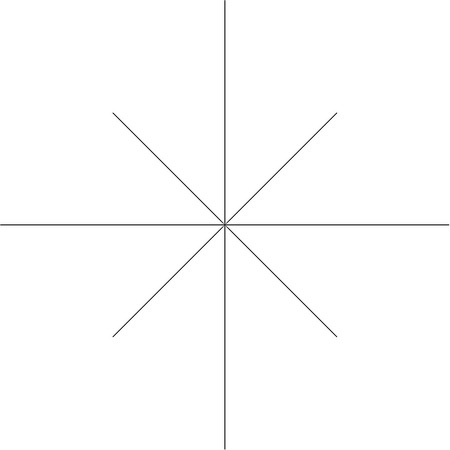}&
\includegraphics[width=0.175\textwidth]{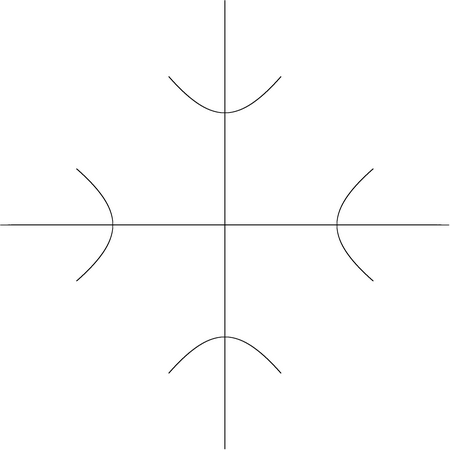}&
\includegraphics[width=0.175\textwidth]{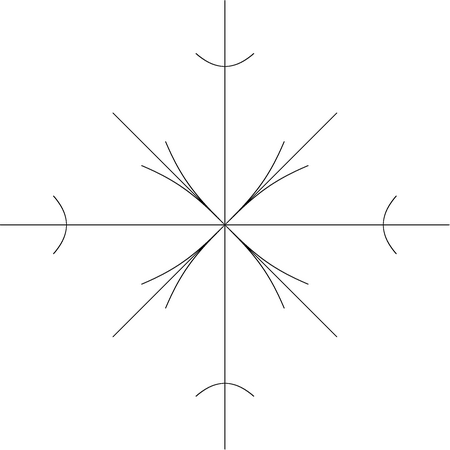}&
\includegraphics[width=0.175\textwidth]{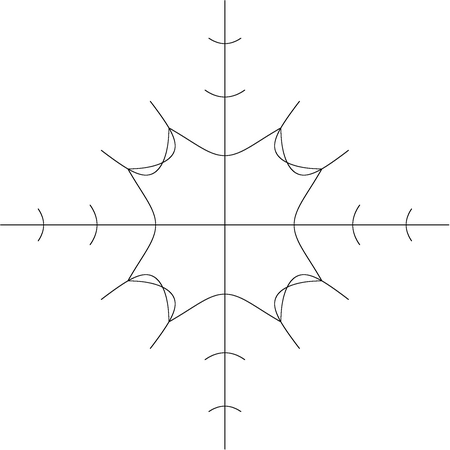}\\
\includegraphics[width=0.175\textwidth]{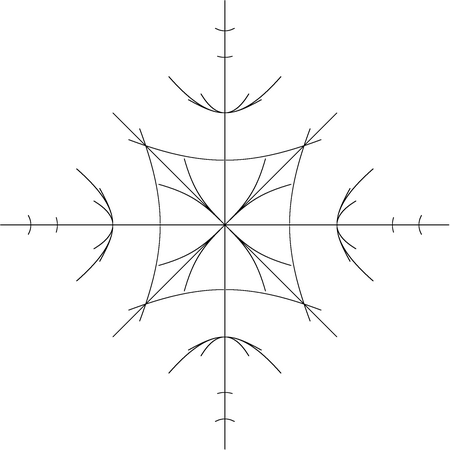}&
\includegraphics[width=0.175\textwidth]{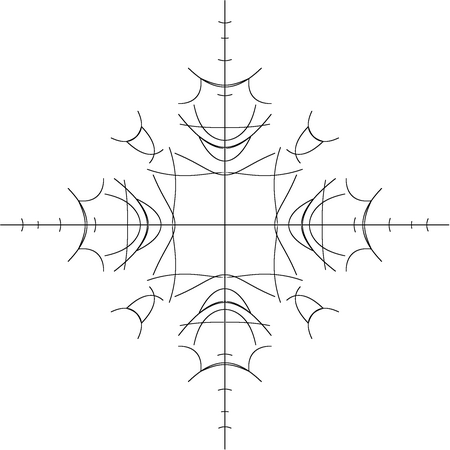}&
\includegraphics[width=0.175\textwidth]{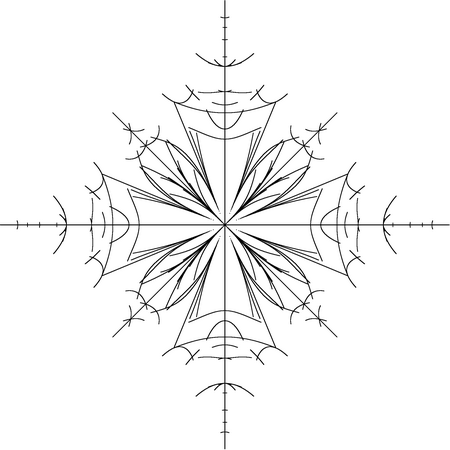}&
\includegraphics[width=0.175\textwidth]{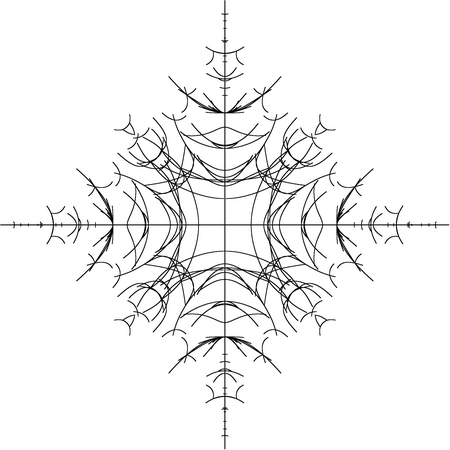}&
\includegraphics[width=0.175\textwidth]{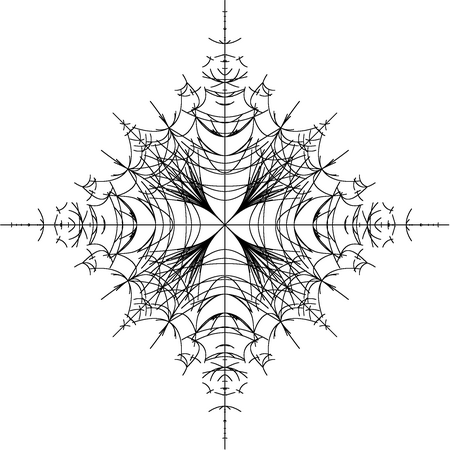}\\
\includegraphics[width=0.175\textwidth]{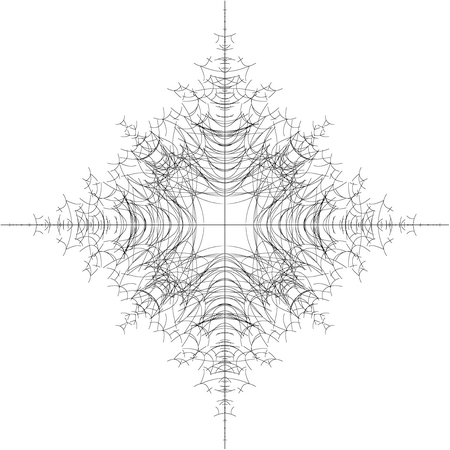}&
\includegraphics[width=0.175\textwidth]{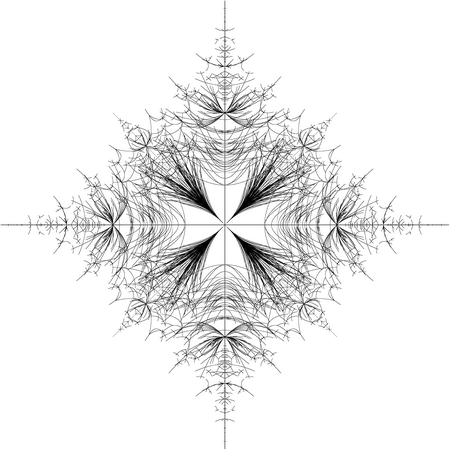}&
\includegraphics[width=0.175\textwidth]{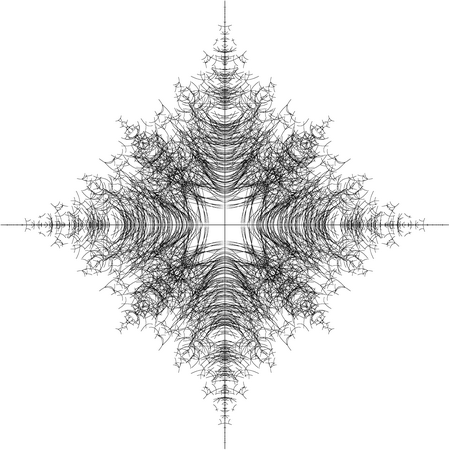}&
\includegraphics[width=0.175\textwidth]{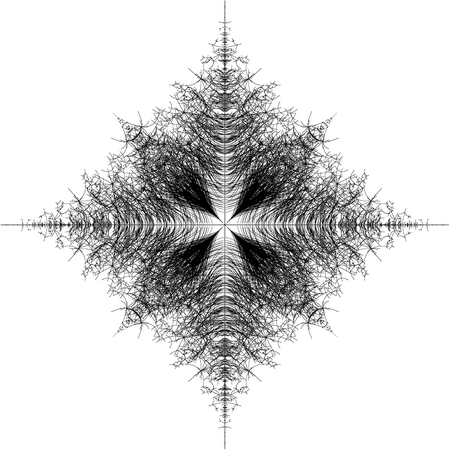}&
\includegraphics[width=0.175\textwidth]{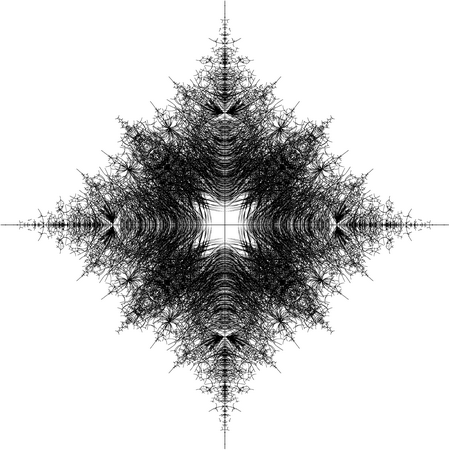}\\
\includegraphics[width=0.175\textwidth]{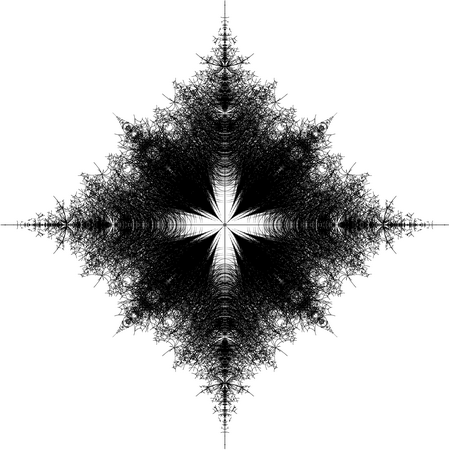}&
\includegraphics[width=0.175\textwidth]{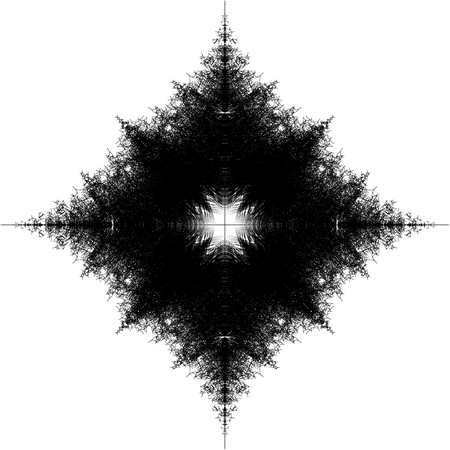}&
\includegraphics[width=0.175\textwidth]{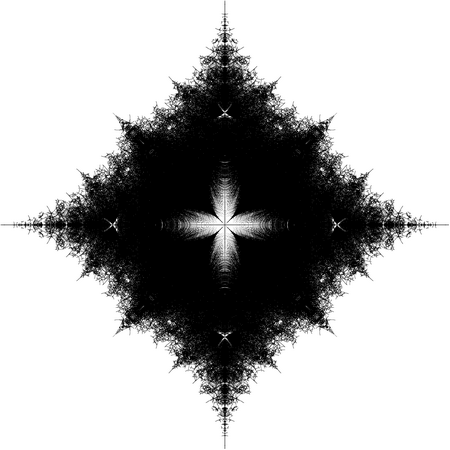}&
\includegraphics[width=0.175\textwidth]{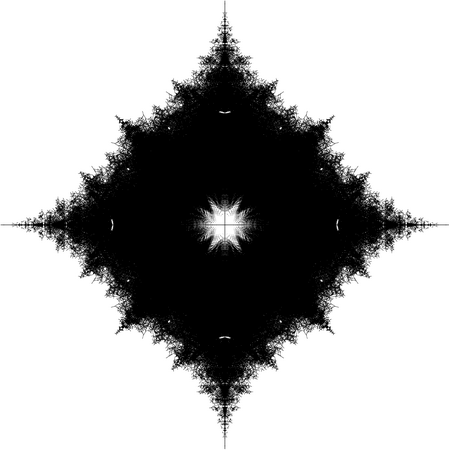}&
\includegraphics[width=0.175\textwidth]{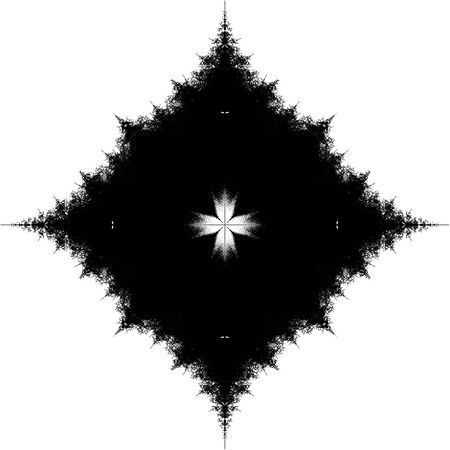}\\
\includegraphics[width=0.175\textwidth]{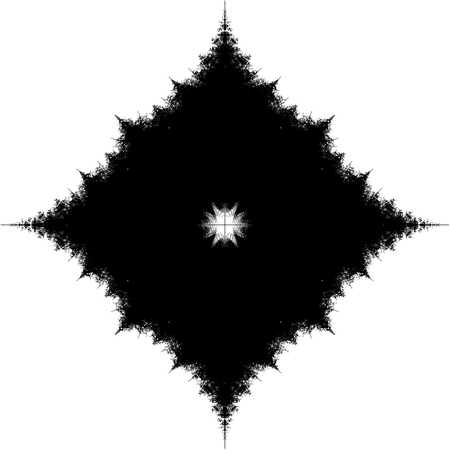}&
\includegraphics[width=0.175\textwidth]{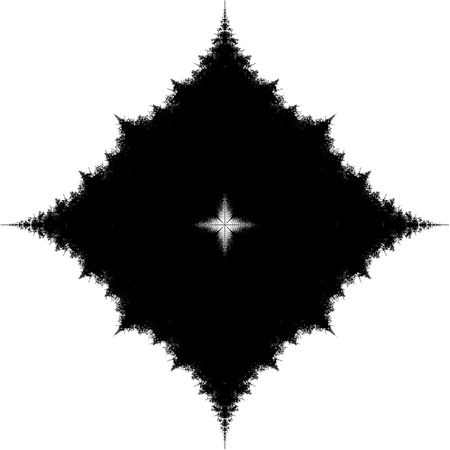}&
\includegraphics[width=0.175\textwidth]{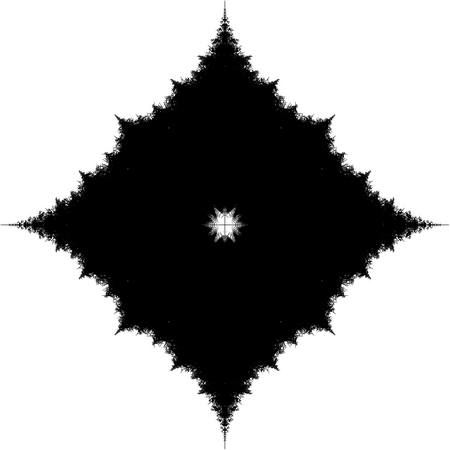}&
\includegraphics[width=0.175\textwidth]{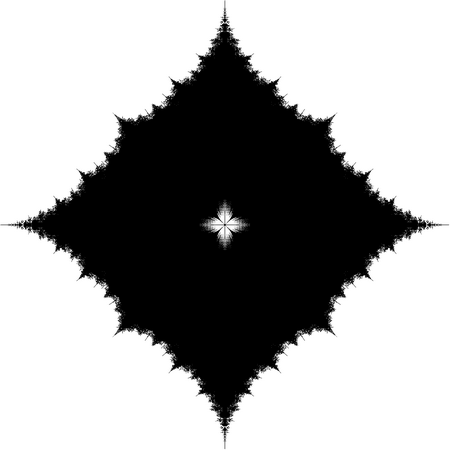}&
\includegraphics[width=0.175\textwidth]{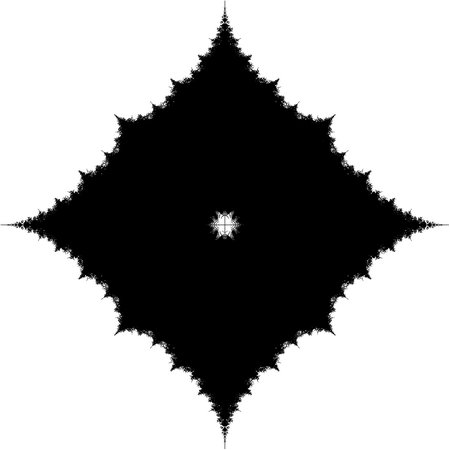}\\
\includegraphics[width=0.175\textwidth]{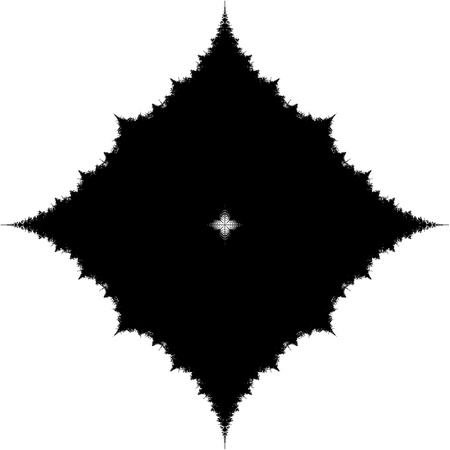}&
\includegraphics[width=0.175\textwidth]{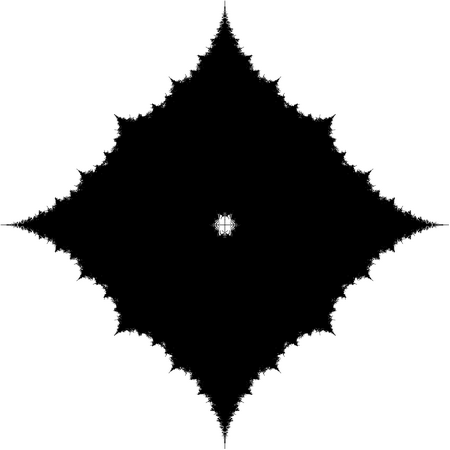}&
\includegraphics[width=0.175\textwidth]{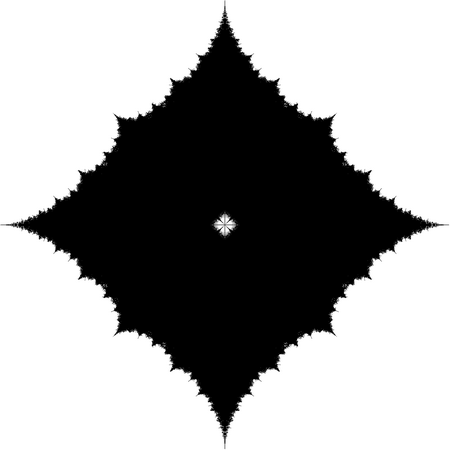}&
\includegraphics[width=0.175\textwidth]{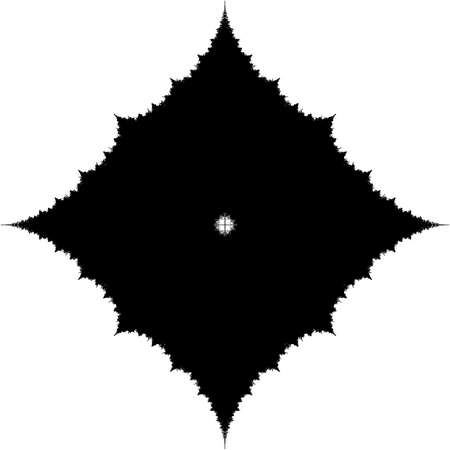}&
\includegraphics[width=0.175\textwidth]{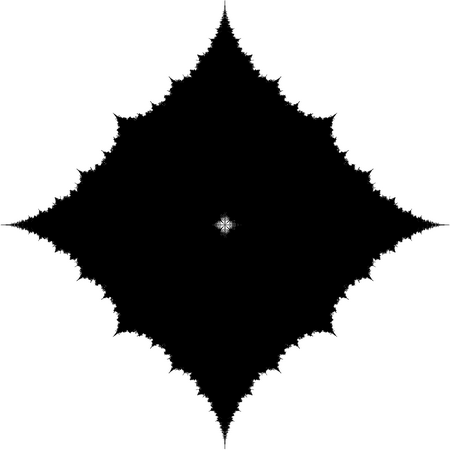}
\end{tabular}
\end{center}
}\fi 
\begin{figure}[h]
\caption{\footnotesize Our figure shows the sets $\pi_n$, as defined
in \eqref{eq:spperntri}, for $n=1,...,30$. Recall that $\pi_1 = [-2,2]\cup [-2\ri,2\ri]$ and that, for each $n$, $\pi_1\subset \pi_n\subset \{x+\ri y: x,y\in \R, \, |x|+|y| \leq 2\}$. Note also that spectra of
periodic infinite matrices can be expressed analytically (by Fourier
transform techniques, see e.g. \cite{BoeSi,Davies2007:Book}) and
that each set $\pi_n$ consists of $k$ analytic arcs, where $2^n/n\le
k\le 2^n$.} \label{fig:30pics1}
\end{figure}

It follows from Theorem \ref{thm:tridiag} d) and e) that both
\[
\sigma_\infty\ :=\ \bigcup_{n=1}^\infty
\sigma_n\qquad\textrm{and}\qquad \pi_\infty\ :=\
\bigcup_{n=1}^\infty\pi_n
\]
are subsets of $\spec A^b$ (with $\sigma_\infty\subset \pi_\infty$,
by \eqref{eq:spefsinper}). These subsets consist of countably many
points and countably many analytic arcs, respectively, and so both
have zero (two-dimensional) Lebesgue measure. Indeed, it is not
clear from any of the results in Theorem \ref{thm:tridiag} (or other
results in the literature) whether $\spec A^b$ has positive Lebesgue
measure, in particular whether it has interior points. Related to
this question, Holz et al.\ \cite[Sections I, V, VI]{HolzOrlandZee},
conjecture that $\clos(\sigma_\infty)\subset \spec A^b$ has a
fractal dimension in the range $(1,2)$, and so has zero Lebesgue
measure.

The purpose of the current paper is to shed light on these questions
by constructing a sequence $c\in\{\pm 1\}^\Z$ for which $\sppt A^c$
contains the open unit disk. As a consequence of formula
\eqref{eq:spec} and the closedness of spectra, this shows that
$\spec A^b$ contains the closed unit disk and therefore has
dimension 2 and a positive Lebesgue measure. This is the main result
of the next section. Intriguingly we will see that the sequence $c$
constructed, while rather irregular, is such that each $\lambda$ in
the unit disk is an eigenvalue of $A^c$ with an eigenvector
$u\in\ell^\infty(\Z)$ whose components are polynomials in $\lambda$
with coefficients forming the regular self-similar pattern of a
discrete Sierpinski triangle \eqref{eq:table}.

We will finish the paper with our own conjecture on the geometry of
$\clos(\sigma_\infty)$ and $\spec A^b$.

\noindent
\ifpics{ 
\begin{center}
\begin{tabular}{ccccc}
\includegraphics[width=0.175\textwidth]{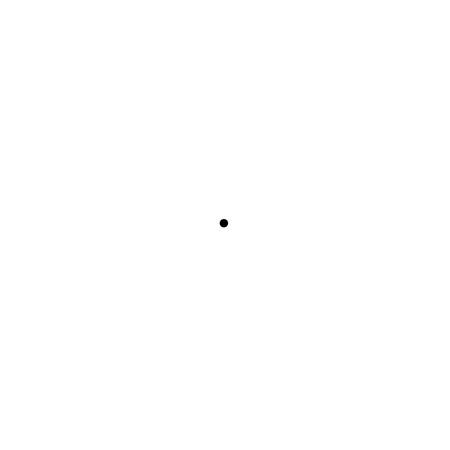}&
\includegraphics[width=0.175\textwidth]{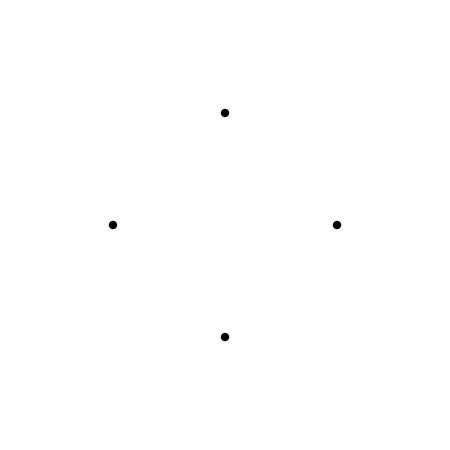}&
\includegraphics[width=0.175\textwidth]{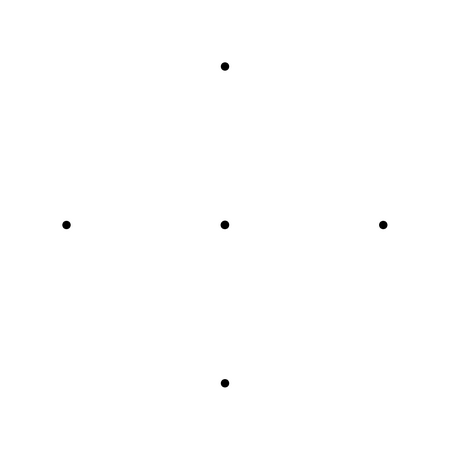}&
\includegraphics[width=0.175\textwidth]{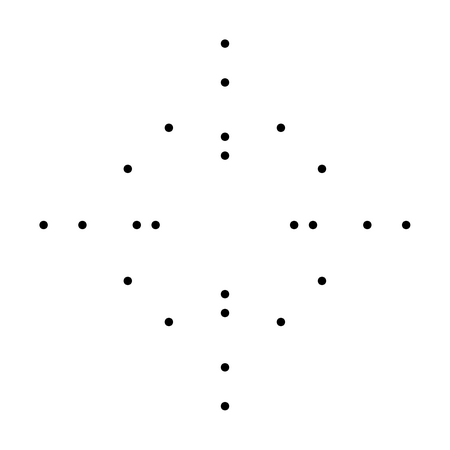}&
\includegraphics[width=0.175\textwidth]{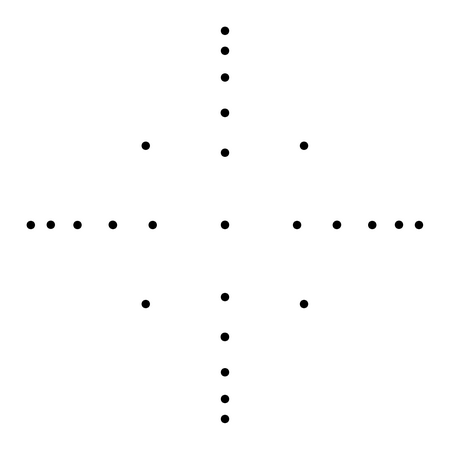}\\
\includegraphics[width=0.175\textwidth]{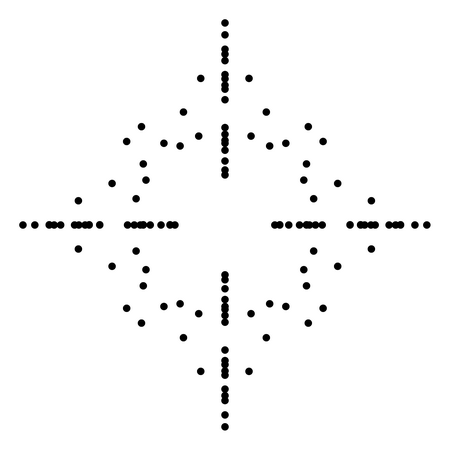}&
\includegraphics[width=0.175\textwidth]{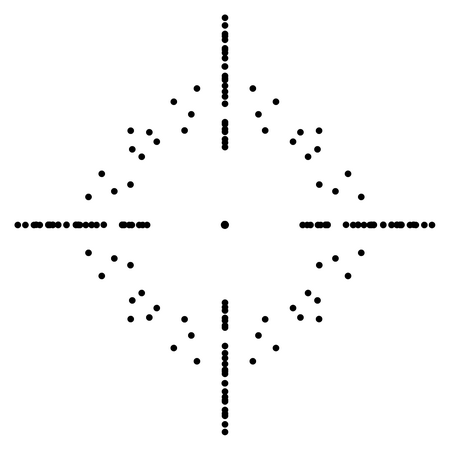}&
\includegraphics[width=0.175\textwidth]{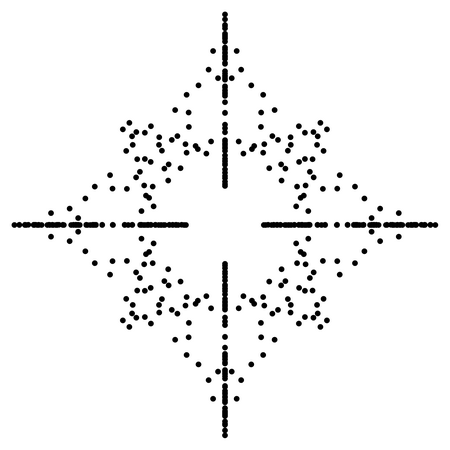}&
\includegraphics[width=0.175\textwidth]{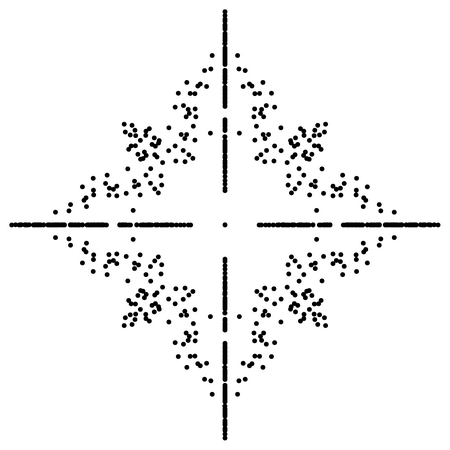}&
\includegraphics[width=0.175\textwidth]{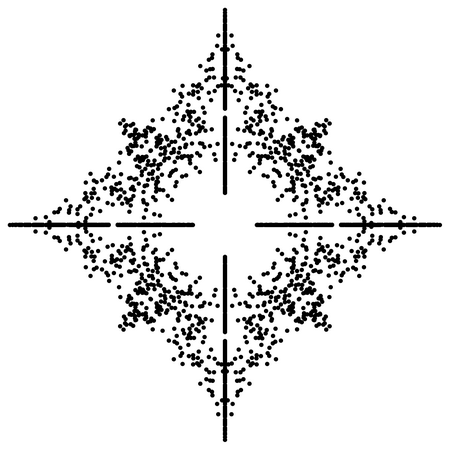}\\
\includegraphics[width=0.175\textwidth]{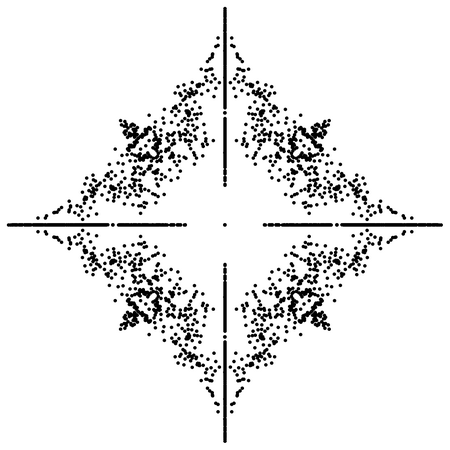}&
\includegraphics[width=0.175\textwidth]{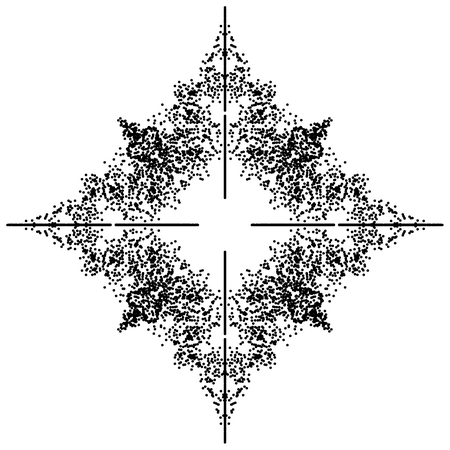}&
\includegraphics[width=0.175\textwidth]{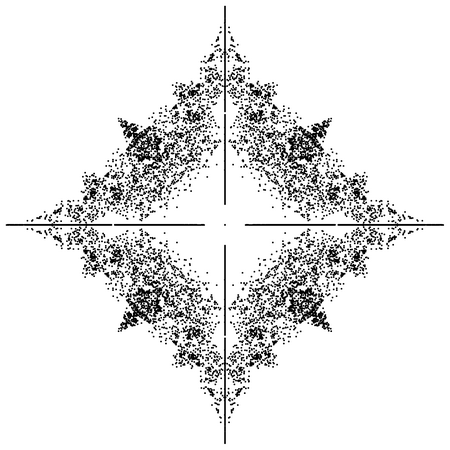}&
\includegraphics[width=0.175\textwidth]{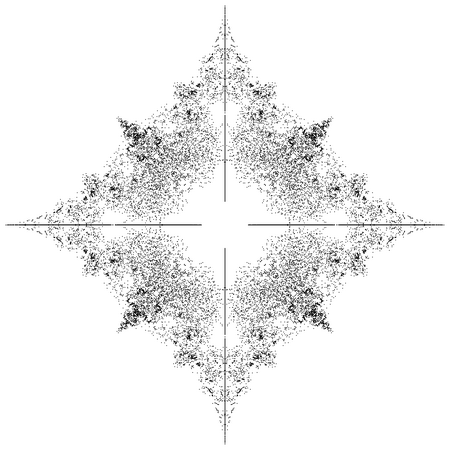}&
\includegraphics[width=0.175\textwidth]{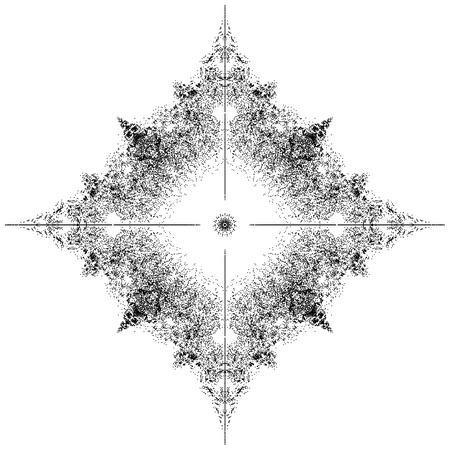}\\
\includegraphics[width=0.175\textwidth]{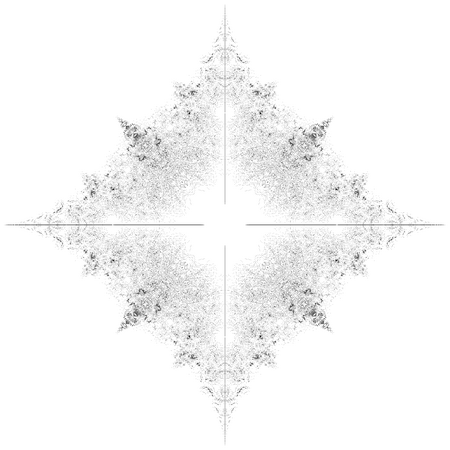}&
\includegraphics[width=0.175\textwidth]{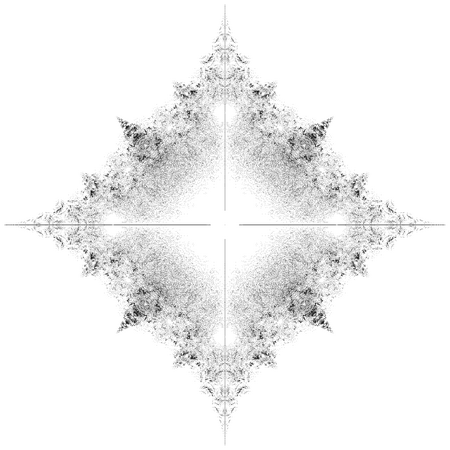}&
\includegraphics[width=0.175\textwidth]{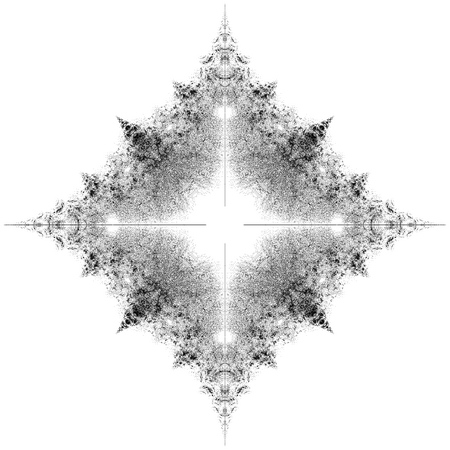}&
\includegraphics[width=0.175\textwidth]{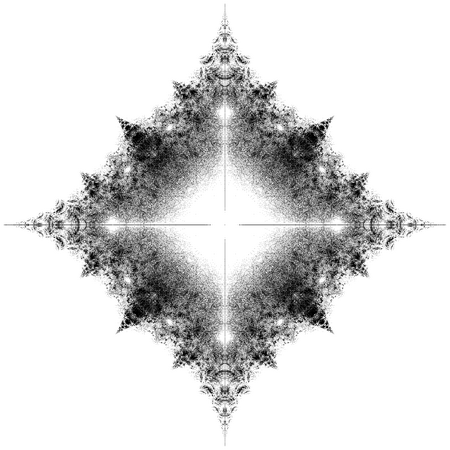}&
\includegraphics[width=0.175\textwidth]{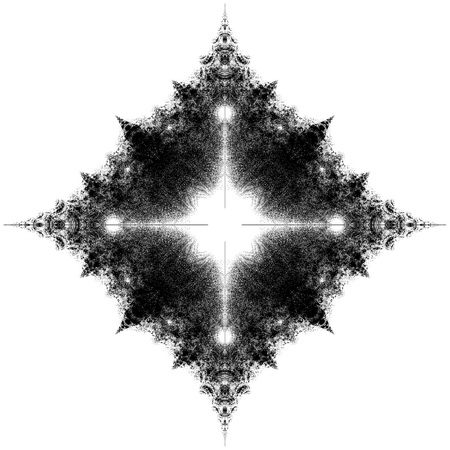}\\
\includegraphics[width=0.175\textwidth]{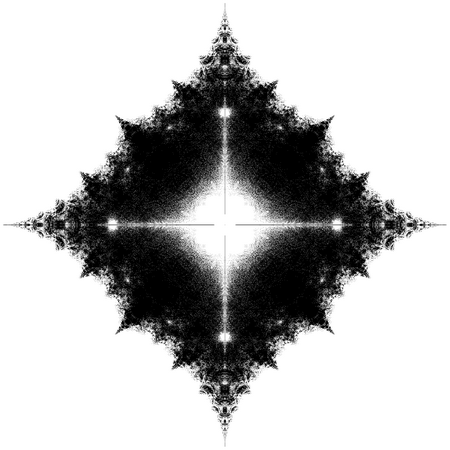}&
\includegraphics[width=0.175\textwidth]{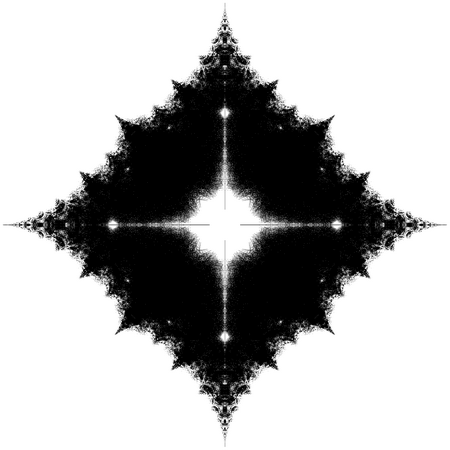}&
\includegraphics[width=0.175\textwidth]{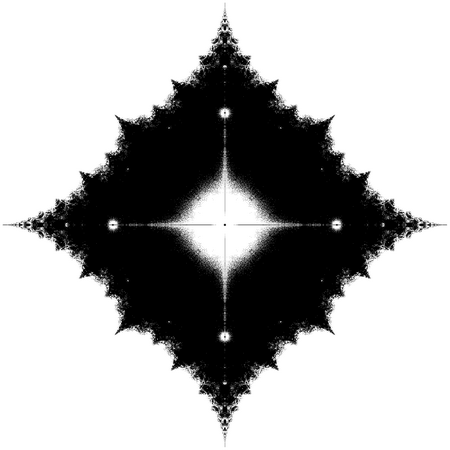}&
\includegraphics[width=0.175\textwidth]{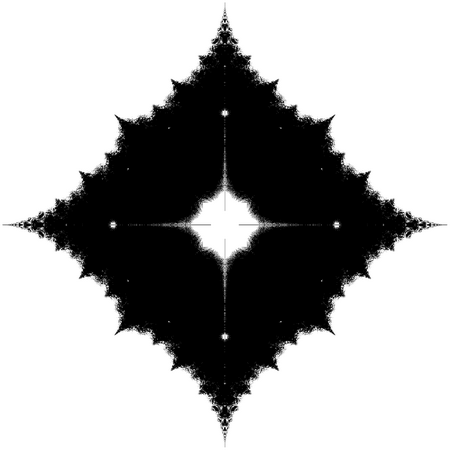}&
\includegraphics[width=0.175\textwidth]{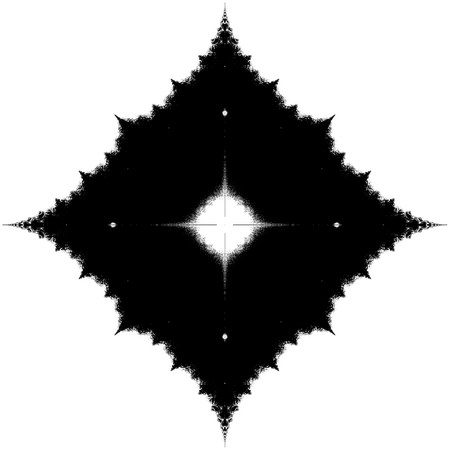}\\
\includegraphics[width=0.175\textwidth]{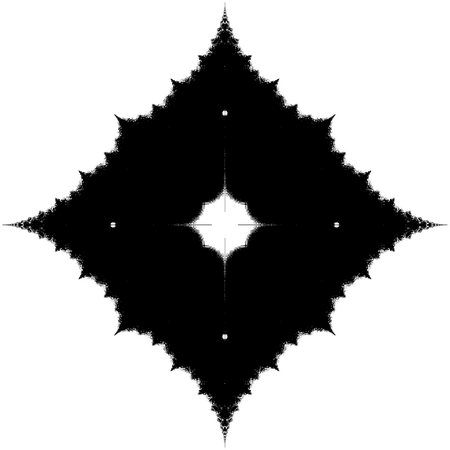}&
\includegraphics[width=0.175\textwidth]{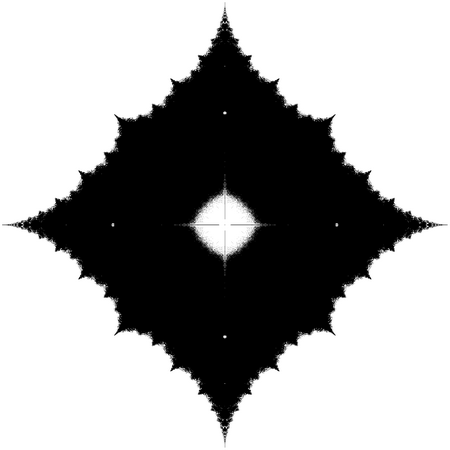}&
\includegraphics[width=0.175\textwidth]{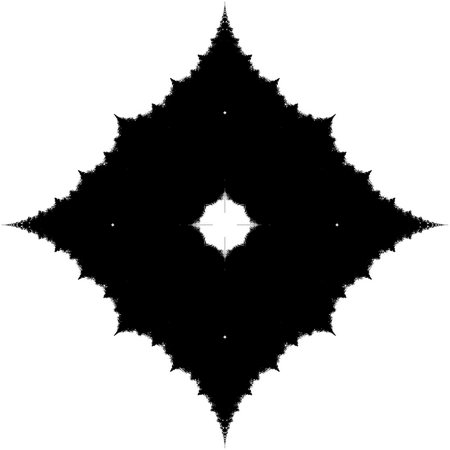}&
\includegraphics[width=0.175\textwidth]{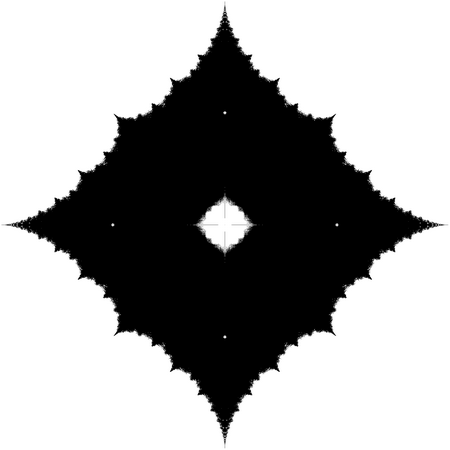}&
\includegraphics[width=0.175\textwidth]{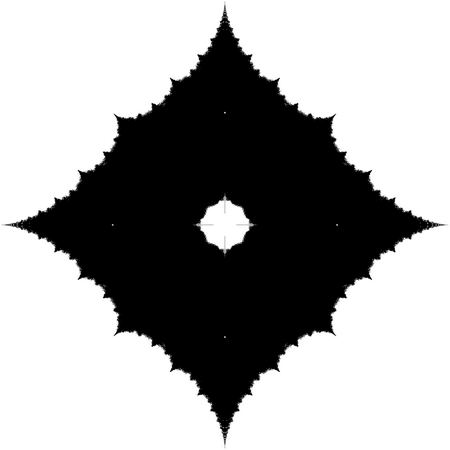}
\end{tabular}
\end{center}
}\fi 
\begin{figure}[h]
\caption{\footnotesize Our figure shows the sets $\sigma_n :=
\sigma_{n,0}$ of all $n\times n$ matrix eigenvalues, as defined in
\eqref{eq:spfs}, for $n=1,...,30$. Note that in the first pictures
(with only a few eigenvalues), we have used heavier pixels for the
sake of visibility. By \eqref{eq:spefsinper}, each of the sets with
$n=1,2,...,14$ in this figure is contained, respectively, in the set
number $2n+2$ of Figure \ref{fig:30pics1}.} \label{fig:30pics2}
\end{figure}

\noindent
\begin{center}
\ifpics{\includegraphics[width=\textwidth]{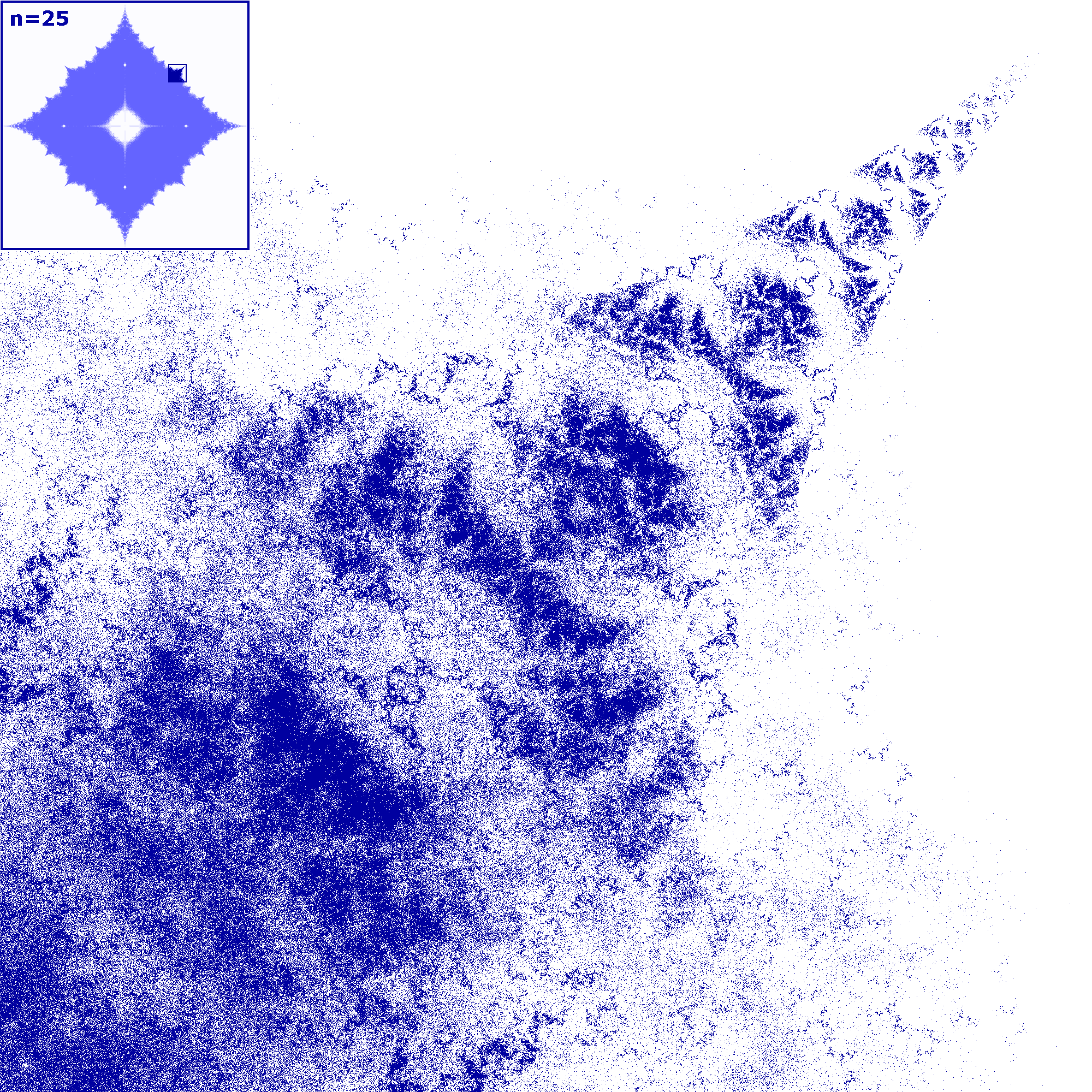}}\fi
\end{center}
\begin{figure}[h]
\caption{\footnotesize This is a zoom into $\sigma_{25}$ -- the 25th
picture of Figure \ref{fig:30pics2}. The location of this zoom is
near the point $1+i$, which is the midpoint of the northeast edge of
the square $\clos(W(A^b)) = \conv\{2,-2,2\ri,-2\ri\}$. The
picture clearly suggests self-similar features of the set $\sigma_{25}$.}
\label{fig:pic25zoom}
\end{figure}


\section{A sequence $c$ for which $\spec A^c$ contains the unit disk}
The formula \eqref{eq:spec} for the spectrum of $A^b$ when
 $b\in\{\pm 1\}^\Z$ is pseudo-ergodic motivates the following approach to
decide whether a given point $\lambda\in\C$ is in $\spec A^b$ or
not: look for a sequence $c\in\{\pm 1\}^\Z$ such that
$\lambda\in\sppt A^c$, in other words, such that there exists a non-zero $u\in\ell^\infty(\Z)$
with $A^cu=\lambda u$, i.e.
\begin{equation} \label{eq:rec}
u_{i+1}\ =\ \lambda u_{i}\ -\ c_{i}\,u_{i-1}
\end{equation}
for $i\in\Z$. If such a sequence $c$ exists then
$\lambda\in\spec A^b$ -- if not, then not.

\begin{center}
\ifpics{\includegraphics[width=\textwidth]{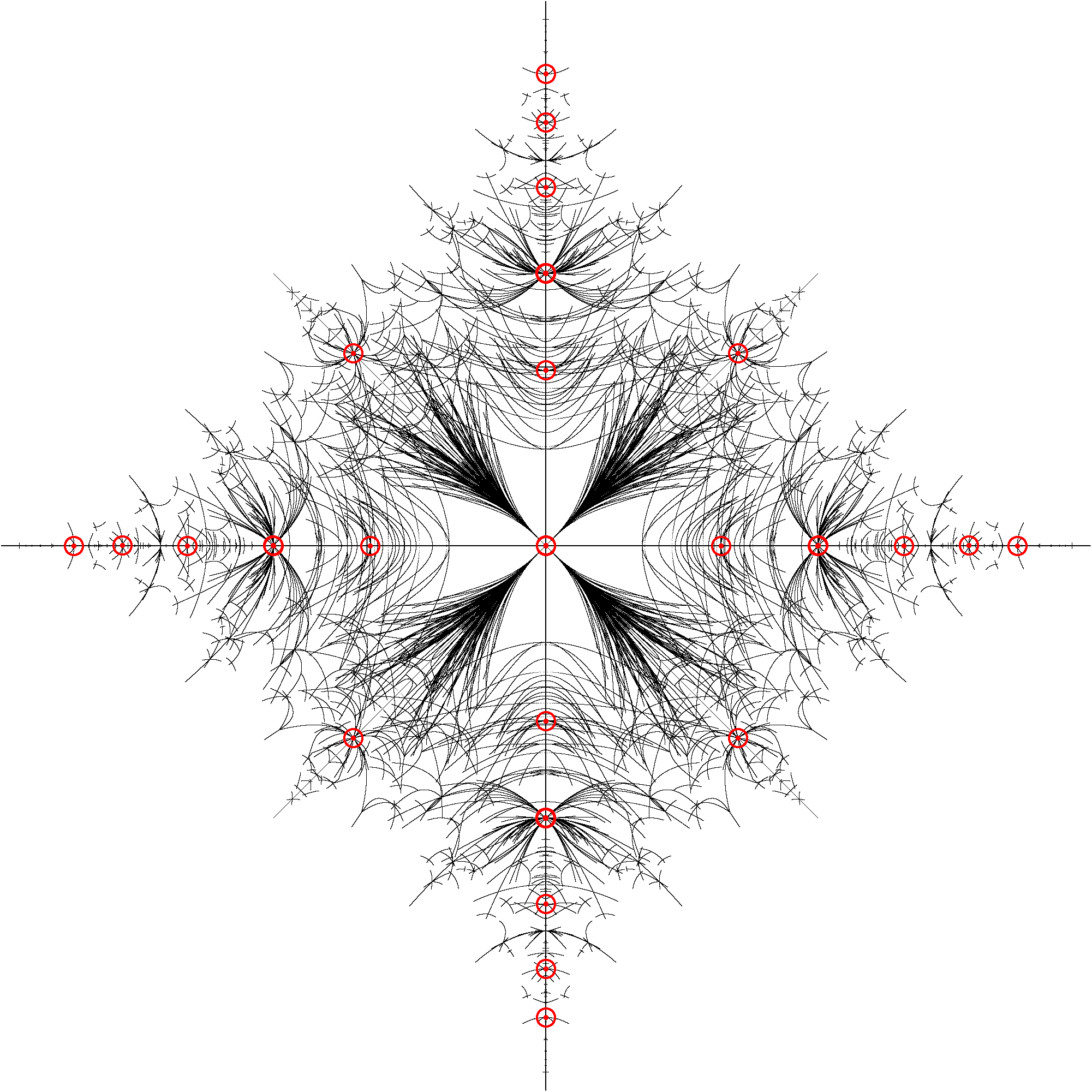}}\fi
\end{center}
\begin{figure}[h]
\caption{\footnotesize Here we see the inclusion $\sigma_5\ \subset\
\pi_{12}$, which holds by \eqref{eq:spefsinper} with $n=5$. (The points in $\sigma_5$ are indicated by circled dots.)}
\label{fig:pic5in12}
\end{figure}

Starting from $u_{0}=0$ and $u_{1}=1$, we will successively use
(\ref{eq:rec}) to compute $u_{i}$ for $i=2,3,...$ (an analogous
procedure is possible for $i=-1,-2,-3,...$) and see whether the
sequence remains bounded. Doing so, we get
\begin{eqnarray*}
u_{2}\!\!&=&\!\!\lambda,\qquad u_{3}\ =\ \lambda^2-c_2,\qquad
u_{4}\ =\ \lambda^3-(c_2+c_3)\lambda,\\
u_{5}\!\!&=&\!\!\lambda^4-(c_2+c_3+c_4)\lambda^2+c_2c_4,
\end{eqnarray*}
and so on.
Explicitly,
it is easy to check that, for $i\geq 3$, the solution of (\ref{eq:rec}) with initial
conditions $u_0=0$ and $u_1=1$ is given by the
characteristic polynomial
\[
u_{i} = \left|\begin{array}{rrrrr}\lambda&-1\\
-c_2&\lambda&\ddots\\
&\ddots&\ddots&-1\\
&&-c_{i-1}&\lambda
\end{array}\right|.
\]
Thus, for $i\geq 3$, $u_{i}$ is a polynomial of degree $i-1$ in
$\lambda$ with coefficients depending on $c_2, ..., c_{i-1}$. We
will aim to achieve that $u$ be a bounded sequence at least for
$|\lambda|<1$.

With this in mind we should try to keep the coefficients of these
polynomials small. Precisely, our strategy will be to try to choose
$c_1, c_{2}, ...\in\{\pm 1\}$ such that each $u_{i}$ is a polynomial
in $\lambda$ with coefficients in $\{-1,0,1\}$. The following table,
where we abbreviate $-1$ by ``$-$'', $+1$ by ``$+$'', and $0$ by a
space, suggests that this seems to be possible.

\begin{equation} \label{eq:table}
{\footnotesize
\begin{array}{|r|c|ccccccccccccccccc|}
\hline
&&&&&&&&&&&&&&&&&&\\[-2mm]
&&\multicolumn{4}{l}{j\ \to}&\multicolumn{12}{r}{\textrm{coefficients of $\lambda^{j-1}$ in the polynomial }u_i}&\\
i&c_{i}&1&2&3&4&5&6&7&8&9&10&11&12&13&14&15&16&\cdots\\
\hline
1&+&+&&&&&&&&&&&&&&&&\\
2&+& &+&&&&&&&&&&&&&&&\\
3&-&-& &+&&&&&&&&&&&&&&\\
4&-& & & &+&&&&&&&&&&&&&\\
5&+&-& &+& &+&&&&&&&&&&&&\\
6&-& &-& & & &+&&&&&&&&&&&\\
7&+&-& & & &+& &+&&&&&&&&&&\\
8&-& & & & & & & &+&&&&&&&&&\\
9&+&-& & & &+& &+& &+&&&&&&&&\\
10&+& &-& & & &+& & & &+&&&&&&&\\
11&-&+& &-& &-& & & &-& &+&&&&&&\\
12&+& & & &-& & & & & & & &+&&&&&\\
13&-&-& &+& & & & & &+& &-& &+&&&&\\
14&-& &-& & & & & & & &+& & & &+&&&\\
15&+&-& & & & & & & &+& & & &+& &+&&\\
16&-& & & & & & & & & & & & & & & &+&\\
\vdots&\vdots&&&&&&&&&\vdots&&&&&&&&\\ \hline
\end{array}
}
\end{equation}

For $i,j\in\N$, denote the coefficient of $\lambda^{j-1}$ in the
polynomial $u_{i}$ by $p_{i,j}$. Table
(\ref{eq:table}) shows the values $p_{i,j}$ for $i,j=1,...,16$, given the specific choices indicated on the left hand side of the table for the coefficients $c_i$. From
(\ref{eq:rec}) it follows that
\begin{equation} \label{eq:rec2}
p_{i+1,j}\ =\ p_{i,j-1}\ -\ c_{i}\,p_{i-1,j},
\end{equation}
for $i\in \N$ and $j=1,2,...,i+1$, where we have defined $p_{i,j}:=0$ if $j<1$, $i<1$,
or $j>i$.

Let us explore more systematically whether it is possible to choose the coefficients $c_i$ so as to ensure that all the coefficients $p_{i,j}\in \{-1,0,1\}$. Note first that, if this is possible, then if, for some $i,j$, one has that $p_{i,j-1}\ne 0$ and $p_{i-1,j}\ne 0$,
then $p_{i,j-1}$, $p_{i-1,j}\in\{-1,1\}$. Thus it follows from (\ref{eq:rec2}) that $p_{i+1,j}=0$, i.e.
\begin{equation} \label{eq:bi}
c_{i}\ =\ p_{i,j-1}/p_{i-1,j}\ =\ p_{i,j-1}\, p_{i-1,j},
\end{equation}
since otherwise $p_{i+1,j}\in\{-2,2\}$.  Illustrating this, look at
$p_{15,1}=-1$ and $p_{14,2}=-1$ in the above table. If we chose
$c_{15}=-1$, we would get from (\ref{eq:rec2}) that
$p_{16,2}=-2\not\in\{-1,0,1\}$, so it is necessary to choose
$c_{15}=1=p_{15,1}\, p_{14,2}$. Luckily, the same value
$c_{15}=1$ is required by the values of $p_{15,9}$ and $p_{14,10}$,
as well as by $p_{15,13}$ and $p_{14,14}$. We will prove that this
coincidence, i.e. that the right-hand side of (\ref{eq:bi}) is (if
non-zero) independent of $j$, is not a matter of fortune. As a
result we will show that the pattern of coefficients in table
(\ref{eq:table}) continues without end, only using values from
$\{-1,0,1\}$ for $p_{i,j}$ and from $\{\pm 1\}$ for $c_{i}$. To
prove this, we will make use of a particular self-similarity in the
triangular pattern of (\ref{eq:table}); more precisely, we will
show that the pattern of non-zero values of the coefficients $p_{i,j}$
forms a so-called infinite discrete Sierpinski triangle.

\begin{proposition} \label{prop:Sierpinski}
Define the sequence $c\in\{\pm 1\}^\Z$, for positive indices by
$c_1=1$ and by the requirement that
\[
c_{2i}\ =\ c_{2i-1}\,c_i\qquad\textrm{and}\qquad c_{2i+1}\ =\
-c_{2i},\qquad i=1,2,...\,,
\]
and for non-positive indices by
\[
c_{-i}= c_{i+1}, \quad i = 0,1,...\,.
\]
Further, given $\lambda\in \C$, define the sequence
$u=(u_{i})_{i\in\Z}$, by the requirement that
$$
u_{i+1}\ =\ \lambda u_{i} - c_i u_{i-1}, \quad i\in\Z,
$$
and by the initial conditions
$$
u_{0} = 0, \quad u_{1} = 1.
$$
Then, as a function of $\lambda$, for $i\in \Z$, $u_{i}$ is a
polynomial of degree $|i|-1$ with all its coefficients taking values
in the set $\{-1,0,1\}$.

In more detail, denoting, for $i,j\in \N$, the coefficient of $\lambda^{j-1}$ in the polynomial $u_i$ by $p_{i,j}$, the following statements hold.

\begin{itemize}
\item[(i)] $p_{i,j}=0$ for $j>i$, so that, for every $i\in\N$,
$$
u_{i}\ =\ \sum_{j=1}^{i} p_{i,j}\, \lambda^{j-1}.
$$
\item[(ii)] Defining, additionally, $p_{i,j}:= 0$ if $i,j\in \N\cup\{0\}$ and $i=0$ or $j=0$, it holds that $p_{1,1}=1$ and that
\begin{equation} \label{eqnnew}
p_{i+1,j}\ =\ p_{i,j-1}\ -\ c_{i}\,p_{i-1,j},
\end{equation}
for $i\in \N$ and $j=1,2,...,i+1$.
\item[(iii)] $p_{i,j}=0$ if $i+j$ is odd.
\item[(iv)] Writing the semi-infinite coefficient matrix $P=(p_{i,j})_{i,j\in\N}$ in block form as $P=(\mathbf{p}_{i,j})_{i,j\in\N}$ where, for $i,j\in\N$,
$$
\mathbf{p}_{i,j} := \left(\begin{array}{cc} p_{2i-1,2j-1}&p_{2i-1,2j}\\
p_{2i,2j-1}&p_{2i,2j}\end{array}\right),
$$
it holds, for $i\in\N$, that $\mathbf{p}_{i,j}=0$ for $j>i$ while,
for $j=1,...,i$,
\begin{equation} \label{eqneq2}
\mathbf{p}_{i,j} =
\left\{\begin{array}{cl}
p_{i,j}\left(\begin{array}{cc}1&0\\0&1\end{array}\right),&\textrm{if
$i+j$ is even},\\
c_{2i-1}\,p_{i-1,j}\left(\begin{array}{cc}1&0\\0&0\end{array}\right),&\textrm{if
$i+j$ is odd.}\end{array}\right.
\end{equation}
\item[(v)] $p_{i,j}\in\{-1,0,1\}$ for $i,j\in\N$.
\item[(vi)] Let $V := \{(0,0),(-1,-1),(1,-1)\}$ and let $S:=\left\{ (i,j)\in\N^2: p_{i,j} \mbox{ is non-zero}\right\}$.
Let $\Sigma:= 2^{\N^2}$ be the set of all subsets of $\N^2$, and
define $\bT:\Sigma\to\Sigma$ by
$$
\bT(\sigma)\ :=\ 2\sigma + V\ =\ \{2a+b: a\in \sigma, \, b\in V\},
\quad \mbox{ for }\sigma\in\Sigma.
$$
Then, where
$$
S_1\,:=\,\{(1,1)\}, \quad \mbox{ and }\quad S_{n+1}\,:=\,\bT(S_n),
\quad n\in\N,
$$
it holds that
$$
S\, =\, \bigcup_{n\in\N} S_n \quad \mbox{ and that } \quad S\,=\,
\bT(S).
$$
\item[(vii)] For $i\in \N\cup\{0\}$,
$$
u_{-i} = d_i\, u_{i},
$$
where, for $j\in\N\cup\{0\}$,
$$
d_{2j}:= (-1)^j c_{2j}, \quad d_{2j+1} := (-1)^{j+1}.
$$
\end{itemize}
\end{proposition}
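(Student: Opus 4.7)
The plan is to prove the seven parts in order of logical dependency, concentrating effort on part (iv), from which (v) and (vi) will follow. Parts (i)--(iii) are direct consequences of the recurrence: (ii) is merely the component form of $A^c u = \lambda u$, with $p_{1,1}=1$ forced by $u_1=1$; (i) follows by induction on $i$ using $u_0=0$; and (iii) likewise follows by induction, since $(i-1)+j$ and $i+(j-1)$ both have parity opposite to $i+j$, so the right-hand side of (\ref{eqnnew}) vanishes whenever $(i+1)+j$ is odd.

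The heart of the proof is (iv), which I would establish by induction on $i$. The base case $\mathbf{p}_{1,1} = I_2$ is a direct computation (using $p_{2,2} = p_{1,1} - c_1 p_{0,2} = 1$). For the inductive step, assume the block form for all $\mathbf{p}_{n, j'}$; fix $j$ and compute the four entries of $\mathbf{p}_{n+1, j}$. The top-right entry $p_{2n+1, 2j}$ and the bottom-left entry $p_{2n+2, 2j-1}$ vanish immediately by (iii), while the two diagonal entries are given, via (\ref{eqnnew}), by
\[
p_{2n+1, 2j-1}\, =\, p_{2n, 2j-2}\, -\, c_{2n}\, p_{2n-1, 2j-1},\qquad p_{2n+2, 2j}\, =\, p_{2n+1, 2j-1}\, -\, c_{2n+1}\, p_{2n, 2j}.
\]
Each of $p_{2n, 2j-2}$, $p_{2n-1, 2j-1}$, $p_{2n, 2j}$ is an entry of $\mathbf{p}_{n, j-1}$ or $\mathbf{p}_{n, j}$ and is read off from the inductive hypothesis. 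Splitting into the two cases according to the parity of $n+j$, the verification reduces to two algebraic identities coming from the definition of $c$: the identity $c_{2n+1} = -c_{2n}$ produces the cancellation $-c_{2n} p_{n,j} - c_{2n+1} p_{n,j} = 0$ forcing the bottom-right of $\mathbf{p}_{n+1, j}$ to vanish when $(n+1)+j$ is odd, while $c_{2n} c_{2n-1} = c_n$ converts $p_{n, j-1} - c_{2n} c_{2n-1}\, p_{n-1, j}$ into $p_{n+1, j}$ via the original recurrence for $p$, handling the case when $(n+1)+j$ is even. This is the main obstacle of the proposition: the recursive definition of $c$ is engineered precisely so that these two cancellations occur, propagating the Sierpinski self-similarity cleanly from scale $i$ to scale $2i$.

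Parts (v)--(vii) then follow. For (v), each diagonal entry $p_{2i-1, 2j-1}$ or $p_{2i, 2j}$ equals $\pm p_{i, j}$ or $\pm p_{i-1, j}$ by (iv), while off-diagonal entries vanish, so induction on $\max(i,j)$ propagates the bound $|p_{i,j}|\le 1$ from $p_{1,1}=1$. For (vi), (iv) shows that each $(i,j)\in S$ (which forces $i+j$ even by (iii)) contributes $(2i-1, 2j-1)$ and $(2i, 2j)$ to $S$ via its own block, and the odd-parity block $\mathbf{p}_{i+1, j}$ contributes $(2i+1, 2j-1)$ to $S$; conversely every element of $S$ arises in this way. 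In set notation this reads $S = \bT(S)$, and since $\{(1,1)\} = S_1 \subseteq S$, iteration gives $S_n \subseteq S$, while a downward induction on $\max(a,b)$ using $S = \bT(S)$ traces every $(a,b)\in S$ back to $(1,1)\in S_1$, placing it in some $S_n$. Finally, for (vii) I would induct on $i \ge 1$ from the base cases $u_0 = 0 = d_0 u_0$ and $u_{-1} = -1 = d_1 u_1$ (the latter obtained from (\ref{eq:rec}) at $k=0$); the inductive step rewrites (\ref{eq:rec}) at index $-i$ as $u_{-i-1} = c_{i+1}(\lambda u_{-i} - u_{-i+1})$ (using $c_{-i} = c_{i+1}$ and $c_{i+1}^2 = 1$), substitutes $u_{-i} = d_i u_i$ and $u_{-i+1} = d_{i-1} u_{i-1}$ from the inductive hypothesis, and, via the linear independence of the polynomials $u_i$ and $u_{i-1}$ (for $i \ge 2$; the case $i=1$ is done by hand), reduces the required identity $u_{-i-1} = d_{i+1} u_{i+1}$ to the single algebraic relation $c_{i+1} d_i = d_{i+1}$, which is verified by inspection in each of the two parity cases of $i$.
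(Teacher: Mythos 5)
Your proposal is correct and, for the core of the proposition, follows essentially the same route as the paper: parts (i)--(iii) by direct induction from the recurrence, part (iv) by induction on the block row index using precisely the two identities $c_{2n}c_{2n-1}=c_n$ and $c_{2n+1}=-c_{2n}$ to produce the two cancellations, (v) read off from (iv), and (vii) reduced to the single relation $c_{i+1}d_i=d_{i+1}$ (equivalently $d_i=c_{i+1}d_{i+1}$, as the paper writes it; your remark that only one coefficient identity is needed is fine, since the matching of the $u_{i-1}$ coefficients follows from the same relation applied at index $i-1$ together with $c_i^2=1$). The one genuine divergence is in part (vi): to pass from $S=\bT(S)$ to $S=\bigcup_n S_n$, the paper introduces a metric on $\Sigma=2^{\N^2}$, shows $\bT$ is a contraction on $\Sigma_1=\{\sigma:(1,1)\in\sigma\}$ with constant $3/4$, and invokes the contraction mapping theorem (which also yields uniqueness of the fixed point), whereas you argue by downward induction on $\max(a,b)$, tracing each $(a,b)\in S$ back through a $\bT$-preimage in $S$ of strictly smaller $\max$ until reaching $(1,1)$. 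Your argument is more elementary and does work -- one checks that for each of the three cases $(a,b)=(2i,2j)$, $(2i-1,2j-1)$, $(2i+1,2j-1)$ the preimage $(i,j)$ satisfies $\max(i,j)<\max(a,b)$ unless $(a,b)=(1,1)$ -- while the paper's fixed-point argument buys slightly more (uniqueness of the fixed point in $\Sigma_1$ and Hausdorff-type convergence $d(\sigma_n,S)\to 0$ for an arbitrary starting set $\sigma_1\in\Sigma_1$), at the cost of setting up the metric space machinery.
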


~\vspace{-28mm}  

\noindent
\begin{tabular}{cp{113mm}}
\unitlength0.3mm 
\begin{picture}(100,100)(0,80)

\put(12,88){\line(0,-1){88}} \put(12,88){\line(1,0){88}}

\put(7,57){\makebox(0,0)[]{\footnotesize $i$}}
\put(5,25){\makebox(0,0)[]{\footnotesize $2i$}}
\put(38,95){\makebox(0,0)[]{\footnotesize $j$}}
\put(68,95){\makebox(0,0)[]{\footnotesize $2j$}}

\definecolor{grau1}{gray}{0.70}
\linethickness{2.7mm}
\put(33,56.5){\color{grau1}\line(1,0){9}}

\definecolor{grau2}{gray}{0.90}
\linethickness{6mm} \put(60,0){\color{grau2}\line(0,1){40}}
\linethickness{0.5pt}

\put(33,52){\framebox(9,9)[]{}} \put(38,56){\makebox(0,0)[]{$x$}}

\put(50,0){\framebox(20,40)[]{}} \put(50,20){\line(1,0){20}}
\put(56,34){\makebox(0,0)[]{$x$}} \put(66,25){\makebox(0,0)[]{$x$}}
\put(56,14){\makebox(0,0)[]{$y$}}

\multiput(14,57)(4,0){5}{\line(1,0){1}}
\multiput(14,25)(4,0){9}{\line(1,0){1}}
\multiput(37,85)(0,-4){6}{\line(0,-1){1}}
\multiput(66,86)(0,-4){12}{\line(0,-1){1}}

\put(44,56){\vector(1,-1){14}}
\end{picture} &
\begin{remark} \label{rem:ss}
Statements $(iv)$ and $(vi)$ reveal the self-similar nature of the
pattern (\ref{eq:table}). With respect to a scaling of the pattern
by the factor 2, an entry $p_{i,j}$, with $i+j$ even, replicates
three times: as $p_{2i-1,2j-1}$, $p_{2i,2j}$ and, multiplied by
$c_{2i+1}$, as $p_{2i+1,2j-1}$. So the ``volume'' of the pattern
\eqref{eq:table} triples under a scaling by 2, which is why (see
\cite{Dotyetal05}) its zeta dimension is $\log_2 3 \approx 1.585$ --
exactly the fractal (Hausdorff or box-counting) dimension of its
bounded version, the usual Sierpinski triangle or gasket
\cite{Falconer}.
\end{remark}
\end{tabular}

~ \\[-2mm]  

\noindent As an immediate consequence of Proposition
\ref{prop:Sierpinski} and formula \eqref{eq:spec} we get our main result.

\begin{theorem} \label{thm:D}
For the sequence $c\in\{\pm 1\}^\Z$ from Proposition
\ref{prop:Sierpinski}, it holds that the closed unit disk
$\overline\D:=\{z\in\C:|z|\le 1\}$ is contained in $\spec\, A^c$.
Consequently, for a pseudo-ergodic $b\in\{\pm 1\}^\Z$, one has
$\overline\D\subset\spec A^b$, so that $\spec A^b$ has dimension 2
and a positive Lebesgue measure.
\end{theorem}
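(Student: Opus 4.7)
The plan is to exploit Proposition \ref{prop:Sierpinski} in a very direct way: for every $\lambda$ in the open unit disk $\D$, the sequence $u=(u_i)_{i\in\Z}$ produced there is a non-zero $\ell^\infty$ eigenvector of $A^c$ at $\lambda$. Combined with the identity \eqref{eq:spec}, this will give $\D\subset \sppt A^c \subset \spec A^c \subset \spec A^b$, and then closedness of the spectrum yields $\overline{\D}\subset \spec A^b$.

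First I would show that $u\in\ell^\infty(\Z)$ whenever $|\lambda|<1$. By Proposition \ref{prop:Sierpinski}(i) and (v), for $i\in \N$ the entry $u_i$ is a polynomial in $\lambda$ of degree $i-1$ with coefficients in $\{-1,0,1\}$, so the geometric series bound
\[
|u_i|\ \le\ \sum_{j=0}^{i-1}|\lambda|^j\ \le\ \frac{1}{1-|\lambda|}
\]
holds for all $i\in \N$. For negative indices, Proposition \ref{prop:Sierpinski}(vii) gives $u_{-i}=d_i u_i$ with $|d_i|=1$, so the same bound $|u_{-i}|\le 1/(1-|\lambda|)$ applies. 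Together with $u_1=1\ne 0$, this shows $u$ is a non-zero element of $\ell^\infty(\Z)$.

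Next I would observe that, directly from the recurrence $u_{i+1}=\lambda u_i - c_i u_{i-1}$ of Proposition \ref{prop:Sierpinski}, the row of $A^c-\lambda I$ indexed by $i$ applied to $u$ gives $u_{i-1}(-c_i) + u_i(-\lambda) + u_{i+1} = 0$ (using that the subdiagonal of $A^c$ in row $i$ carries the entry $c_i$ in the notation of \eqref{eq:Ab} with $b_k=c_k$); hence $A^c u = \lambda u$ in $\ell^\infty(\Z)$. Consequently $\lambda\in\sppt A^c$ for every $\lambda\in \D$, and so $\D\subset\sppt A^c\subset \spec A^c$. Because the spectrum of a bounded operator is closed, $\overline{\D}\subset \spec A^c$ as well.

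Finally, for any pseudo-ergodic $b\in \{\pm 1\}^\Z$, the identity \eqref{eq:spec} gives
\[
\spec A^b\ =\ \bigcup_{c'\in\{\pm 1\}^\Z}\sppt A^{c'}\ \supset\ \sppt A^c\ \supset\ \D,
\]
and closedness of $\spec A^b$ upgrades this to $\overline{\D}\subset \spec A^b$. Since $\overline{\D}$ has positive two-dimensional Lebesgue measure, this proves the last clause. There is essentially no obstacle here — all the real work has already been done in Proposition \ref{prop:Sierpinski} — the only point worth noting is that one must not try to extract the conclusion for $\lambda$ on the unit circle directly from the eigenvector construction (where the polynomial bound degenerates), but instead obtain it by closing up the open-disk result.
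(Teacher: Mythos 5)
Your proposal is correct and follows essentially the same route as the paper: the geometric-series bound $|u_i|\le 1/(1-|\lambda|)$ from Proposition \ref{prop:Sierpinski}(i), (v) and (vii), the observation that $A^c u=\lambda u$ by construction, closedness of the spectrum to capture the unit circle, and \eqref{eq:spec} to transfer the inclusion to a pseudo-ergodic $b$. The only cosmetic difference is that you deduce ``dimension 2'' from positive Lebesgue measure while the paper invokes monotonicity of dimension; both are immediate.
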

\begin{proof} Let $\lambda\in\D:=\{z\in\C:|z|<1\}$, let $c$ be the sequence
from Proposition \ref{prop:Sierpinski} and $u:\Z\to\C$ the
corresponding eigenfunction from (\ref{eq:rec}). Then, for every
$i\in\N$,
\[
|u_{-i}|\ =\ |u_{i}|\ =\
\left|\sum_{j=1}^{i}p_{i,j}\lambda^{j-1}\right| \ \le\
\sum_{j=1}^{i}|p_{i,j}|\,|\lambda|^{j-1} \ \le\
\sum_{j=1}^\infty|\lambda|^{j-1}\ =\ \frac{1}{1-|\lambda|}
\]
since $p_{i,j}\in\{-1,0,1\}$ for all $i,j$, showing that
$u\in\ell^\infty(\Z)$, and, by our construction (\ref{eq:rec}),
$A^cu=\lambda u$. So $\D\subset\sppt A^c\subset\spec\, A^c$. Since
$\spec\, A^c$ is closed, it holds that
$\overline\D\subset\spec\,A^c$. The claim for a pseudo-ergodic $b$
now follows from $\spec A^c\subset\spec A^b$, by \eqref{eq:spec}.
Finally, from the monotonicity of (all notions of) dimension
\cite{Falconer}, it follows that $2=\dim(\overline\D)\le\dim(\spec
A^b)\le\dim(\R^2)=2$.
\end{proof}

\begin{proofof}{Proposition \ref{prop:Sierpinski}}
Statements $(i)$ and $(ii)$ are clear from the discussion preceding Proposition \ref{prop:Sierpinski}, and statement $(iii)$ then follows easily by induction. Thus $\mathbf{p}_{i,j}=0$ for $j>i$, and in every matrix $\mathbf{p}_{i,j}$ the off-diagonal entries are zero,
i.e. $p_{2i-1,2j}=0=p_{2i,2j-1}$ for all $i,j$.

We will now prove $(iv)$ by proving by induction that, for each $i\in\N$, \eqref{eqneq2} holds for $j=1,...,i$.
It is easy to check that \eqref{eqneq2} holds for $i=j=1$. Now suppose that, for some $k\in\N$, \eqref{eqneq2} holds for $i=1,...,k$, $j=1,..,i$.
We will show that this implies that \eqref{eqneq2} holds for $i=k+1$ and $j=1,...,k+1$.

We let $i=k+1$ and start with the case when $i+j$ is
even. By (\ref{eqnnew}) we have that
\begin{eqnarray}
\nonumber p_{2i-1,2j-1}&=&p_{2i-2,2j-2}\,-\,c_{2i-2}\, p_{2i-3,2j-1}\\
\label{eq:parti_11}&=&p_{2(i-1),2(j-1)}\,-\,c_{2i-2}\,
p_{2(i-1)-1,2j-1},
\end{eqnarray}
with $p_{2(i-1),2(j-1)}=p_{i-1,j-1}=0$ if $j=1$ and, by the inductive hypothesis (and since $i-1+j-1$ is even),
$p_{2(i-1),2(j-1)}=p_{i-1,j-1}$ if $j>1$.
Also, by the inductive hypothesis, $p_{2(i-1)-1,2j-1}=c_{2(i-1)-1}\,p_{i-2,j}$ since
$i-1+j$ is odd, while, from the definition of the sequence $c$, $c_{2i-2} = c_{2i-3}\,c_{i-1}$.
Inserting these results in (\ref{eq:parti_11}), we get that
\begin{eqnarray} \nonumber
p_{2i-1,2j-1}&=& p_{i-1,j-1}\, -\,
c_{2i-3}\,c_{i-1}\,c_{2i-3}\,p_{i-2,j}\\ \label{eqf}
&=& p_{i-1,j-1}\, -\,c_{i-1}\,p_{i-2,j}=
p_{i,j},
\end{eqnarray}
by \eqref{eqnnew}. We have observed already that $p_{2i-1,2j}=0=p_{2i,2j-1}$ for all $i,j$, so it remains to consider $p_{2i,2j}$. By \eqref{eqnnew}, \eqref{eqf}, and the inductive hypothesis which implies that $p_{2i-2,2j}=p_{2(i-1),2j}=0$ as $i-1+j$ is odd, we have that
\[
p_{2i,2j}\ =\ p_{2i-1,2j-1}\,-\,c_{2i-1}\,p_{2i-2,2j}\ =\ p_{i,j}.
\]

Now suppose $i+j$ is odd. Then, by \eqref{eqnnew} and the inductive hypothesis,
\begin{eqnarray*}
p_{2i-1,2j-1}&=&p_{2i-2,2j-2}\,-\,c_{2i-2}\,p_{2i-3,2j-1}\\
&=&0\,-\,c_{2i-3}\,c_{i-1}\,p_{i-1,j}\ =\ c_{2i-1}\,p_{i-1,j},
\end{eqnarray*}
since $c_{2i-1}=-c_{2i-2}=-c_{2i-3}\,c_{i-1}$. By \eqref{eqnnew} and the inductive hypothesis and noting that $i-1+j$ is even,
\begin{eqnarray*}
p_{2i,2j}&=&p_{2i-1,2j-1}\,-\,c_{2i-1}\,p_{2i-2,2j}\\
&=&c_{2i-1}\,p_{i-1,j}\,-\,c_{2i-1}\,p_{i-1,j}\ =\ 0.
\end{eqnarray*}

This completes the proof of $(iv)$, and $(v)$ follows from $(iv)$ by a simple induction argument.

To see that $(vi)$ is true, observe first that, from $(i)$, $(iii)$,
and $(iv)$ (and cf.\ Remark \ref{rem:ss}), it holds for
$i^\prime,j^\prime\in \N$ that $(i^\prime,j^\prime)\in S$ iff, for
some $i,j\in\N$ either $(i^\prime,j^\prime)=(2i,2j)$ and $(i,j)\in
S$; or $(i^\prime,j^\prime)= (2i-1,2j-1)$ and $(i,j)\in S$; or
$(i^\prime,j^\prime)= (2i+1,2j-1)$ and $(i,j)\in S$. From this it
follows that $S = \bT(S)$.

Define a metric $d$ on $\Sigma$ by
$$
d(\sigma,\tau)\ :=\ \sum_{(i,j)\in (\sigma \cup \tau)\setminus
(\sigma\cap \tau)} 2^{-i-j}, \qquad \sigma,\tau\in\Sigma.
$$
Then, since
$\big(\bT(\sigma)\cup\bT(\tau)\big)\setminus\big(\bT(\sigma)\cap\bT(\tau)\big)
\,\subset\, \bT\big((\sigma\cup\tau)\setminus(\sigma\cap\tau)\big)$
for all $\sigma,\tau\in\Sigma$,
\begin{eqnarray} \nonumber
d\left(\bT(\sigma),\bT(\tau)\right) &\leq & \sum_{(i,j)\in (\sigma \cup \tau)\setminus (\sigma\cap \tau)} \left(2^{-2i-2j}+2^{-(2i-1)-(2j-1)}+2^{-(2i+1)-(2j-1)}\right)\\
& = & \sum_{(i,j)\in (\sigma \cup \tau)\setminus (\sigma\cap \tau)}
2^{-i-j}\left(2^{-i-j}+2^{2-i-j}+2^{-i-j}\right)\ \leq\
\frac{3}{4}\, d(\sigma,\tau), \label{eq:cont}
\end{eqnarray}
if $(1,1)\not\in (\sigma \cup \tau)\setminus (\sigma\cap \tau)$. Let
$\Sigma_1:=\{\sigma\in \Sigma:(1,1)\in \sigma\}$. Then
$\bT(\Sigma_1)\subset \Sigma_1$ and, by \eqref{eq:cont}, $\bT$ is a
contraction mapping on  $\Sigma_1$. Thus, by the contraction mapping
theorem, $\bT$ has a unique fixed point in $\Sigma_1$, which is the
set $S$, and, if $\sigma_1\in \Sigma_1$ and $\sigma_{n+1}:=
\bT(\sigma_n)$, $n\in\N$, then $d(\sigma_n,S)\to 0$ as $n\to\infty$.
In particular, $d(S_n,S)\to 0$ as $n\to\infty$. Since also (by an
easy induction argument) $S_1\subset S_2 \subset ...$, it follows
that $S=\cup_{n\in\N} S_n$.

Define $v_{-i}$ for $i=0,1,...$ by $v_{-i}:= d_iu_i$, which implies that $v_0=0$, and set $v_1=1$. Then, since $u_{i}$ is defined uniquely for $i\leq 0$ by the requirement that it satisfy \eqref{eq:rec} for $i\leq 0$ with the initial conditions that $u_0=0$ and $u_1=1$, to show $(vii)$ it is enough to check that the sequence $v_i$ satisfies \eqref{eq:rec} for $i\leq 0$, i.e.\ that
$$
v_{-i+1} = \lambda v_{-i} - c_{-i} v_{-i-1}, \quad i=0,1,... \,.
$$
But $v_1-\lambda v_0 + c_0v_{-1} = 1 + c_0d_1u_1 =0$, so the equation holds for $i=0$, and, for $i\in\N$,
\begin{eqnarray*}
v_{-i+1} - \lambda v_{-i} + c_{-i} v_{-i-1} &=& d_{i-1} u_{i-1} -\lambda d_i u_i + c_{i+1} d_{i+1}u_{i+1}\\
& = & (d_{i-1}-c_ic_{i+1}d_{i+1})u_{i-1} - \lambda (d_i-c_{i+1}d_{i+1})u_i,
\end{eqnarray*}
since $u_{i+1} = \lambda u_i-c_iu_{i-1}$. Since $u_0=0$, the right hand side of this last equation is zero for $i\in\N$ provided that $d_i=c_{i+1}d_{i+1}$ for $i\in\N$. But this follows from the definitions of the sequences $c$ and $d$.
\end{proofof}

\begin{remark}
The standard infinite discrete Sierpinski triangle (e.g.\
\cite{LathropLutzSummers09}) is the set $\tilde S \subset \N^2$
defined by $\tilde S := \cup_{n\in\N} \tilde S_n$, where $\tilde
S_1:= \{(1,1)\}$ and the sets $\tilde S_n$, $n=2,3,...$, are defined
recursively by $\tilde S_{n+1} := 2\tilde S_n + \tilde V$, where
$\tilde V := \{(0,0),(-1,-1),(0,-1)\}$. One instance where $\tilde
S$ arises is as the pattern of odd coefficients in Pascal's
triangle: for $i\in \N$ and $j=1,...,i$ the coefficient of $x^{j-1}$
in $(1+x)^{i-1}$ is odd iff $(i,j)\in \tilde S$, so that the
discrete Sierpinski triangle is often referred to as Pascal's
triangle modulo 2 (e.g.\ \cite{Dotyetal05}). Proposition
\ref{prop:Sierpinski}$(vi)$ (cf.\ Remark \ref{rem:ss}) makes clear
that the pattern $S\subset \N^2$ of the non-zero coefficients in
table (\ref{eq:table}) is essentially that of the standard discrete
Sierpinski triangle $\tilde S$; indeed, the sets $\tilde S$ and $S$
are connected by a linear mapping:  $(i,j)\in \tilde S$ iff
$(2i-j,j)\in S$, for $i,j\in\N$.
\end{remark}

\begin{remark}
Note that the sequence $c$ from Proposition \ref{prop:Sierpinski} is
not pseudo-ergodic since, by $c_{2i+1}=-c_{2i}$, the patterns
``$+++$'' and ``$---$'' can never occur as consecutive entries in
the sequence $c$.
\end{remark}

\bigskip

Based on Theorems \ref{thm:tridiag} and \ref{thm:D} and the numerical results displayed in Figures \ref{fig:30pics1} and \ref{fig:30pics2}, we make the following conjecture.

\noindent {\bf Conjecture.} We conjecture that $\clos(\sigma_\infty) = \clos(\pi_\infty) = \spec A^b$, and that $\spec A^b$ is a
simply connected set which is the closure of its interior and which has a fractal boundary.

\bigskip


{\bf Acknowledgements.} We are grateful to Estelle Basor from the
American Institute of Mathematics for drawing our attention to this
beautiful operator class. Moreover, we would like to acknowledge: the
financial support of a Leverhulme Fellowship and a visiting
Fellowship of the Isaac Newton Institute Cambridge for the first
author; the invitation of the second and third author to the MPA
Workshop at the Isaac Newton Institute in July 2008; the financial
support of a Higher Education Strategic Scholarship for Frontier Research from the Thai Ministry of Higher Education to the second author; and the
Marie-Curie Grants MEIF-CT-2005-009758 and PERG02-GA-2007-224761 of
the EU to the third author.



\newpage  

\noindent {\bf Author's addresses:}\\[2mm]

\noindent Simon~N.~Chandler-Wilde\hfill
\href{mailto:s.n.chandler-wilde@reading.ac.uk}{{\tt
s.n.chandler-wilde@reading.ac.uk}}\\
\rule{3mm}{0pt} and Ratchanikorn Chonchaiya\hfill
\href{mailto:r.chonchaiya@reading.ac.uk}{{\tt
r.chonchaiya@reading.ac.uk}}\\[2mm]
Department of Mathematics\\
University of Reading\\
Reading, RG6 6AX\\
UK\\[5mm]

\noindent
Marko Lindner\hfill \href{mailto:marko.lindner@mathematik.tu-chemnitz.de}{{\tt marko.lindner@mathematik.tu-chemnitz.de}}\\
Fakult\"at Mathematik\hfill (corresponding author)\\
TU Chemnitz\\
D-09107 Chemnitz\\
GERMANY

\end{document}